\tikzset{initial text={},every initial by arrow}
\newcommand{\ie}{\emph{i.e.,}\xspace}
\newcommand{\eg}{\emph{e.g.,}\xspace}
\newcommand{\etal}{\emph{et al.}\xspace}
\newtheorem{definition}{Definition}
\newtheorem{example}{Example}
\newcommand{\names}{\mathcal{N}}
\newcommand{\symbols}{\mathtt{Sym}}
\newcommand{\val}{\mathtt{Val}}
\newcommand{\var}{\mathtt{Var}}
\definecolor{mGray}{rgb}{0.5,0.5,0.5}
\definecolor{backgroundColour}{rgb}{0.95,0.95,0.92}
\lstdefinestyle{CStyle}{
    backgroundcolor=\color{backgroundColour},   
    commentstyle=\itshape\color{purple!40!black},
    keywordstyle=\bfseries\color{green!40!black},
    numberstyle=\tiny\color{mGray},
    stringstyle=\color{orange},
    %identifierstyle=\color{blue},
    basicstyle=\footnotesize,%\ttfamily,
    breakatwhitespace=false,         
    breaklines=true,                 
    captionpos=b,                    
    keepspaces=true,                 
    numbers=left,                    
    numbersep=3pt,                  
    showspaces=false,                
    showstringspaces=false,
    showtabs=false,                  
    tabsize=2,
    belowcaptionskip=1\baselineskip,
    xleftmargin=\parindent,
    escapeinside={*@}{@*},
    language=C
}
\newcommand{\ttt}[1]{\texttt{#1}}
\newcommand{\mtt}[1]{\mathtt{#1}}
\newcommand{\B}{\mtt{B}}
\newcommand{\oprel}{\:\mtt{oprel}\:}
\newcommand{\opreli}{\:\mtt{oprel^i}\:}
\definecolor{pblue}{rgb}{0.13,0.13,1}
\definecolor{pgreen}{rgb}{0,0.5,0}
\definecolor{pred}{rgb}{0.9,0,0}
\definecolor{pgrey}{rgb}{0.46,0.45,0.48}
\lstdefinestyle{JStyle}{
backgroundcolor=\color{backgroundColour}, 
language=Java,
  showspaces=false,
  showtabs=false,
  breaklines=true,
     numbers=left,                    
    numbersep=3pt,
  showstringspaces=false,
  commentstyle=\color{pgreen},
  keywordstyle=\color{pblue},
  stringstyle=\color{pred},
      basicstyle=\footnotesize,%\ttfamily,
    breakatwhitespace=false,  
  moredelim=[il][\textcolor{pgrey}]{$$},
  moredelim=[is][\textcolor{pgrey}]{\%\%}{\%\%}
}
\newcommand{\squishlist}{
 \begin{list}{-}
  { \setlength{\itemsep}{0pt}
     \setlength{\parsep}{1pt}
     \setlength{\topsep}{1pt}
     \setlength{\partopsep}{0pt}
     \setlength{\leftmargin}{0.6em}
     \setlength{\labelwidth}{1.5em}
     \setlength{\labelsep}{0.4em} } }
\newcommand{\squishend}{
  \end{list}  }
\title{Detecting Fault Injection Attacks with Runtime Verification}
\author{
  Ali~Kassem\\
  Univ. Grenoble Alpes, Inria, 
  \\CNRS, Grenoble INP, LIG\\
  38000 Grenoble\\
  France \\
  \texttt{ali.kassem@inria.fr} \\
  \And
  Yli\`{e}s~Falcone\\
  Univ. Grenoble Alpes, Inria, 
  \\CNRS, Grenoble INP, LIG\\
  38000 Grenoble\\
  France \\
  \texttt{ylies.falcone@inria.fr} \\
}
\begin{document}
\maketitle
%%abstract
\begin{abstract}
Fault injections are increasingly used to attack/test secure applications. 
In this paper, we define formal models of runtime monitors that can detect fault injections that result in test inversion attacks and arbitrary jumps in the control flow. 
Runtime verification monitors offer several advantages.
The code implementing a monitor is small compared to the entire application code.  
Monitors have a formal semantics; and we prove that they effectively detect attacks.
Each monitor is a module dedicated to detecting an attack and can be deployed as needed to secure the application.
A monitor can run separately from the application or it can be ``weaved'' inside the application. 
Our monitors have been validated by detecting simulated attacks on a program that verifies a user PIN.  
\end{abstract}

\keywords{Runtime Verification, Monitor, Fault Injection, Detection, Quantified Event Automata, Attacker Model}

%---------------------
\section{Introduction}
%
%%%%%%%%%%%%%%%%%%%%%%%%%%%%%%%%%%%%%%%%%%%%%%%%%%%%%%%%%%%
%%%%%%%%%%%%%%%%%%%%%%%%%%%%%%%%%%%%%%%%%%%%%%%%%%%%%%%%%%%
%
Fault injections are effective techniques to exploit vulnerabilities in embedded applications and implementations of cryptographic primitives~\cite{eurocrypt/BonehDL97,fdtc/BalaschGV11,cosade/DehbaouiMMDT13,cardis/KumarBBGV18,IEEEares/BerthomeHKL12,jhss/YuceSW18}. 
Fault injections can be thwarted (or detected) using software or hardware countermeasures~\cite{tvlsi/KaraklajicSV13,pieee/Bar-ElCNTW06}.
Hardware countermeasures are expensive and unpractical for off-the-shelf products. 
Henceforth, software countermeasures are commonly adopted. 
Software countermeasures can be categorized into algorithm-level and instruction-level countermeasures. 

Most of the existing algorithm-level countermeasures make use of redundancy such as computing a cryptographic operation twice then comparing the outputs~\cite{joc/BonehDL01}, or using parity bits~\cite{ches/KarriKG03}. 
Algorithm-level countermeasures are easy to implement and deploy and usually introduce low (computational and footprint) overhead. 
However, as they are usually based on redundancy, they can be broken using multiple faults injection or by skipping critical instruction~\cite{ches/AumullerBFHS02,cosade/EndoHHTFA14}. 

In the other hand, instruction-level countermeasures require changes to the low level instruction code, for example by applying instruction duplication or triplication~\cite{cases/BarenghiBKPR10,hipeac/BarryCR16}. 
Instruction-level countermeasures are more robust than algorithm-level countermeasures.
Indeed, it is believed that instruction-level countermeasures are secure against multiple faults injection under the assumption that skipping two consecutive instructions is too expensive and requires high synchronization capabilities~\cite{cases/BarenghiBKPR10,jce/MoroHER14}. 
However, instruction-level countermeasures introduce a large overhead, for instance, instruction duplication doubles the execution time.
Moreover, instruction-level countermeasures require changing the instruction set (with dedicated compilers) since \eg some instructions have to be replaced by a sequence of idempotent or semantically equivalent instructions.

In this paper, we use runtime verification principles~\cite{HavelundG05,LeuckerS09,FalconeHR13,BartocciFFR18} and monitors to detect fault injections that result in test inversion or unexpected jumps in the control flow of a program. 
A monitor can run off-line after the end of an execution provided that the necessary events have been saved in a safe log, or on-line in parallel with the execution. 
We use Quantified Event Automata (QEAs)~\cite{fm/BarringerFHRR12} to express our monitors.
We prove that our monitors for test inversion and jump attacks detect them if and only if such attack occur at runtime -- Propositions~\ref{prop:test_inversion} and~\ref{prop:jump}, respectively. 
From an implementation point of view, we validate Java implementations of our monitors using attack examples on a program that verifies a user PIN code taken from the FISSC benchmark~\cite{Dureuil2016}.
Our monitors are lightweight and small in size. The monitors execution time is proportional to the size of the program under surveillance. 
However, the memory overhead can be bounded as only data related to the ``active'' basic block need to be kept in memory. 

This make them suitable for low-resource devices where monitors can run in a small (hardware)-secured memory (where the entire application may not fit). 
Moreover, monitors can run separately from the application under surveillance, and thus can run remotely (in a secure environment) provided a secure communication channel.

The rest of the paper is structured as follows.
Section~\ref{sec:QEAs} overviews Quantified Event Automata. 
Section~\ref{sec:preliminaries} introduces preliminaries. 
Section~\ref{sec:modeling} introduces execution and attacker models.
Section~\ref{sec:monitors} introduces our monitors.
Section~\ref{sec:casestudy} describes an experiment that validates the effectiveness of our monitors.  
Section~\ref{sec:related-work} discusses related work. 
Finally, Section~\ref{sec:conclusion} concludes and outlines avenues for future work.  
\section{Quantified Event Automata}
\label{sec:QEAs}
We briefly overview Quantified Event Automata~\cite{fm/BarringerFHRR12} (QEAs) which are used to express monitors.  
QEAs are an expressive formalism to represent parametric specifications to be checked at runtime. 
An Event Automaton (EA) is a (possibly non-deterministic) finite-state automaton whose alphabet consists of parametric events and whose transitions may be labeled with guards and assignments.
The syntax of EA is built from a set of event names $\names$, a set of values $\val$, and a set of variables $\var$ (disjoint from $\val$). 
The set of symbols is defined as $\symbols = \val \cup \var$. 
An event is a tuple $\langle \mtt{e, p_1, \ldots, p_n} \rangle$, where $\mtt{e} \in \names$ is the event name and $\mtt{p_1, \ldots, p_n} \in \symbols^{\mtt{n}}$ are the event parameters. 
We use a functional notation to denote events: $\langle \mtt{e, p_1, \ldots, p_n} \rangle$ is denoted by $\mtt{e(p_1, \ldots, p_n)}$. 
Events that are variable-free are called ground events, \ie an event $\mtt{e(p_1, \ldots, p_n)}$ is ground if $\mtt{p_1, \ldots, p_n} \in \val^{\mtt{n}}$. 
A \emph{trace} is defined as a finite sequence of ground events.  
We denote the empty trace by $\epsilon$. 

The semantics of an EA is close to the one of a finite-state automaton with the natural addition of guards and assignments on transitions.
A transition can be triggered only if its guard evaluates to $\mathtt{True}$ with the current binding (a map from variables to concrete values), and the assignment updates the current binding. 

A QEA is an EA with some (or none) of its variables quantified by $\forall$ or $\exists$. 
Unquantified variables are left free and they can be manipulated through assignments and updated during the processing of the trace. 
A QEA accepts a trace if after instantiating the quantified variables with the values derived from the trace, the resulting EAs accept the trace. 
Each EA consumes only a certain set of events, however a trace can contain other events which are filtered out. 
The quantification $\forall$ means that a trace has to be accepted by all EAs, while the quantification $\exists$ means that it has to be accepted by at least one EA.  
For a QEA $\mtt{M}$ with quantified variables $\mtt{x_1,\ldots,x_n}$. We use the functional notation $\mtt{M(x_1, \ldots, x_n)}$ to refer to the related EAs depending on the values taken by $\mtt{x_1,\ldots,x_n}$. 

We depict QEAs graphically.
The initial state of a QEA has an arrow pointing to it. 
The shaded states are final states (\ie accepting states), while white states are failure states (\ie non-accepting states). 
Square states are closed-to-failure, {\ie} if no transition can be taken then there is an implicit transition to a (sink) failure state. 
Circular states are closed-to-self (aka skip-states), {\ie} if no transition can be taken, then there is an implicit self-looping transition.
We use the notation $\frac{\mathit{guard}}{\mathit{assignment}}$ to write guards and assignments on transitions, $:=$ for variable assignment, and $==$ for equality test.
\begin{figure}[t]
\centering
\begin{tikzpicture}
\tikzstyle{main}=[circle,thick,draw,fill=white,minimum size=6mm]
\tikzstyle{failure}=[rectangle,thick,draw,fill=white,minimum size=6mm]
\node[] (0) at (-1,0.8)  {$\forall \mtt{i}$};

\node[initial, failure, fill= gray] (1) at (0,0.2)  {1};
\node[main, fill= gray] (2) at (3,0.2)  {2};
\draw[->] (1) to node [above] {$\mtt{e_1(i)}$} (2);
\draw[thick] ($ (current bounding box.south west) + (0,-0.3)$) rectangle  ($ (current bounding box.north east) + (0.3,0)$); 
\end{tikzpicture}
\caption{$\mtt{M_R}$, a QEA for requirement $\mtt{R}$ from Example~\ref{ex:qea}.}
\label{fig:example}
\end{figure}
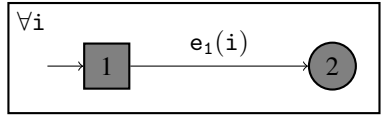

\begin{example}[QEA]
\label{ex:qea}
Figure~\ref{fig:example} depicts\ $\mtt{M_R}$, a QEA that checks whether a given trace satisfies requirement $\mtt{R}$: ``for all $\mtt{i}$, event $\mtt{e_1(i)}$ should precede event $\mtt{e_2(i)}$''.
The alphabet of $\mtt{M_R}$ is $\Sigma_{\mtt{R}} = \{\mtt{e_1(i)},\mtt{e_2(i)}\}$. 
Consequently, any event in an input trace that is not an instantiation of $\mtt{e_1(i)}$ or $\mtt{e_2(i)}$ is ignored by $\mtt{M_R}$. 
QEA $\mtt{M_R}$ has two states (1) and (2), which are both accepting states, one quantified variable $\mtt{i}$, and zero free variables. 
As its initial state (1) is an accepting state, the empty trace is accepted by $\mtt{M_R}$.
State (1) is a square state, hence an event $\mtt{e_2(i)}$ at state (1) leads into an implicit failure state, which is what we want as it would not be preceded by event $\mtt{e_1(i)}$ in this case. 
An event $\mtt{e_1(i)}$ at state (1) leads into state (2) which is a skipping state, so after an occurrence of event $\mtt{e_1(i)}$ any sequence of events (for the same value of $\mtt{i}$) is accepted.
We note that one can equivalently replace state (2) with an accepting square state with a self-loop labeled by $\Sigma_{\mtt{R}}$. 

Quantification $\forall \mtt{i}$ means that the property must hold for all values that $\mtt{i}$ takes in a trace.
Each instantiation of $\mtt{i}$ results in an EA.  

To decide whether a trace is accepted by $\mtt{M_R}$ or not, the trace is first sliced based on the values that can match $\mtt{i}$.
Then, each slice is checked against the event automaton instantiated with the appropriate value for $\mtt{i}$. 
For instance, the trace $\mtt{e_1(I_1)}.\mtt{e_2(I_2)}.\mtt{e_2(I_1)}.\mtt{e_1(I_2)}$ is sliced into the following two slices:
%
%\begin{align*}
	 $\mtt{i}\mapsto \mtt{I_1}: \mtt{e_1(I_1)}.\mtt{e_2(I_1)}$, and  
	 $\mtt{i}\mapsto \mtt{I_2}: \mtt{e_2(I_2)}.\mtt{e_1(I_2)}$.  
%\end{align*}
%
Each slice is checked independently. 
The slice associated with $\mtt{I_1}$ is accepted by $\mtt{M_{R}(I_1)}$ as it ends in the accepting state (2), while the slice associated with $\mtt{I_2}$ is rejected by $\mtt{M_{R}(I_2)}$ since the event $\mtt{e_2(I_2)}$ at state (1) leads to an implicit failure state.
Therefore, the whole trace is rejected by the QEA because of the universal quantification on $\mtt{i}$. 
\end{example}

\section{Preliminaries and Notations}
\label{sec:preliminaries}
As a running example, we consider function \ttt{verifyPIN} which is depicted in Listing~\ref{lst:verifyPIN}\footnote
{
The code is inspired from the C version in the FISSC benchmark~\cite{Dureuil2016}.
}.  
Function \ttt{verifyPIN} is the main function for the verification of a user PIN.
It handles the counter of user trials (variable \ttt{g\_ptc}), which is initialized to 3, \ie the user is allowed for 3 trials (one trial per execution).  
The user is authenticated if the value of \ttt{g\_ptc} is greater than 0, and function \ttt{byteArrayCompare} returns \ttt{BOOL\_TRUE}.
Function \ttt{byteArrayCompare} returns \ttt{BOOL\_TRUE} if \ttt{g\_userPin} and \ttt{g\_cardPin} are equal. 
Note that \ttt{BOOL\_TRUE} and \ttt{BOOL\_FALSE} have the values \ttt{0xAA} and \ttt{0x55}, respectively.  
This provides a better protection against faults that modifies data-bytes.

\begin{lstlisting}[style=CStyle,
	caption={Code of \ttt{verifyPIN}.},
	captionpos=t,
	label=lst:verifyPIN,
	float=tp
	]
void verifyPIN() {
 g_authenticated = BOOL_FALSE;
 if(g_ptc > 0) {
  if(byteArrayCompare(g_userPin,g_cardPin) == BOOL_TRUE) {
   g_ptc = 3;     /*reset the counter of trials*/
   g_authenticated = BOOL_TRUE; }
  else 
   { g_ptc--; }   /*one trial less remaining*/
 } 
 return g_authenticated; 
}
\end{lstlisting}

The monitors defined in Section~\ref{sec:monitors} are generic.
They are independent from programming language and do not require changes to the low-level code because the required instrumentation to produce events can be made at the the source code level.   
However, to describe attacks at a lower level, we make use of the three-address code (TAC) representation. 
TAC is an intermediate-code representation which reassembles for instance LLVM-IR.
TAC is machine independent, easy to generate from source code, and can be easily converted into assembly code.
In TAC, a program is a finite sequence of three-address instructions. 
In particular, an instruction $\mtt{\mbox{\ttt{ifZ} } z \mbox{ \ttt{goto} } L}$ is a conditional branch instruction that directs the execution flow to $\mtt{L}$ if the value of $\mtt{z}$ is 0 (\ie false). 
An instruction \ttt{goto L} is an unconditional branch instruction that directs the execution flow to $\mtt{L}$.  
A label $\mtt{L}$ can be assigned to any instruction in the TAC. 
Instruction $\mbox{\ttt{Push} } \mtt{x}$ pushes the value of $\mtt{x}$ onto the stack.
Before making a function call, parameters are individually pushed onto the stack from right to left.
While $\mbox{\ttt{Pop} } \mtt{k}$ pops $\mtt{k}$ bytes from the stack; it is used to pop parameters after a function call. 

\begin{lstlisting}[style=CStyle,caption={The TAC representation of \ttt{verifyPIN}.},captionpos=t,label=lst:verifyPIN-tac,
	float=bp]
verifyPIN: 
 g_authenticated := BOOL_FALSE 
 _t0 := (g_ptc > 0) 
 ifZ _t0 goto L1
 Push g_cardPin
 Push g_userPin
 _t1:= call byteArrayCompare
 Pop 64
 _t2 := (_t1 == BOOL_TRUE)
 ifZ _t2 goto L2
 g_ptc := 3                  
 g_authenticated := BOOL_TRUE 
 goto L1
 L2: g_ptc := -g_ptc                 
 L1: return g_authenticated
\end{lstlisting}

\begin{example}
Listing~\ref{lst:verifyPIN-tac} depicts the TAC representation of \ttt{verifyPIN}. 
The variables \ttt{\_t0}, \ttt{\_t1} and \ttt{\_t2} are compiler-generated temporaries. 

\end{example}

To specify how the program under verification has to be instrumented, in Section~\ref{sec:execution}, we refer to the control flow graph (CFG) of the program. 
The CFG of a program is a representation of all the paths that might be taken during its execution. 
A CFG is a rooted directed graph ($\mtt{V, E}$), where $\mtt{V}$ is a set of nodes representing basic blocks, and $\mtt{E}$ is a set of directed edges representing possible control flow paths between basic blocks.
A CFG has an entry node and one exit node.
The entry node has no incoming edges while the exit node has no outgoing edges.
A basic block is a maximal sequence $\mtt{S_1 \ldots S_n}$ of instructions such that 
\begin{itemize} 
  \item  
  it can be entered only at the beginning, \ie none of the instructions $\mtt{S_2 \ldots S_n}$ has a label (\ie target of a branch), and 
  \item 
  it can be exited only at the end, \ie none of the 
  instructions $\mtt{S_1 \ldots S_{n-1}}$ is a branch instruction or a return. 
\end{itemize}
A CFG may contain loops. A loop is a sequence $\mtt{B_1, \ldots, B_n}$ of basic blocks dominated by the first basic block: $\mtt{B_1}$, and having exactly one back-edge from the last basic block: $\mtt{B_n}$ into $\mtt{B_1}$.
Note that in a CFG, a basic block $\mtt{B}$ dominates a basic block $\mtt{B'}$ if every path from the entry node to $\mtt{B'}$ goes through $\mtt{B}$. An edge $\mtt{(B',B)}$ is called a back-edge if $\mtt{B}$ dominates $\mtt{B'}$. 

\begin{example}
Figure~\ref{fig:verifyPIN-cfg} depicts the CFG of \ttt{verifyPIN}. 
\end{example}

\begin{figure}[t]
\centering
{\small 
\begin{tikzpicture}
[
  auto,
  node distance = 10mm,
  basic/.style = {draw,rounded corners, align=left, minimum height =.75cm, minimum width=1.5cm, node distance=.65cm},
  box/.style = {basic,rectangle split,rectangle split parts=2, rectangle split part fill={lightgray,white}}, 
]
 \node[box] (b1) 
   {
    $\mtt{B_1}$ 
    \nodepart{second} 
    %\ttt{BeginFunc} \\ 
    \ttt{g\_authenticated := BOOL\_FALSE} \\
    \ttt{\_t0 := (g\_ptc > 0)} \\ 
    \ttt{ifZ \_t0 goto L1}
   }; 
 \node[box,below=of b1] (b2) 
   {
    $\mtt{B_2}$ 
    \nodepart{second} 
    \ttt{Push g\_cardPin} \\ 
    \ttt{Push g\_userPin} \\ 
    \ttt{\_t1:= call byteArrayCompare} \\ 
    \ttt{Pop 64} \\ 
    \ttt{\_t2 := (\_t1 == BOOL\_TRUE)} \\ 
    \ttt{ifZ \_t2 goto L2}
   };
 \node[box,below=of b2] (b3) 
   {
    $\mtt{B_3}$ 
    \nodepart{second} 
    \ttt{g\_ptc := 3} \\ 
    \ttt{g\_authenticated := BOOL\_TRUE} \\ 
    \ttt{goto L1}
   }; 
 \node[box,right=of b3] (b4) 
   {
    $\mtt{B_4}$ 
    \nodepart{second} 
    \ttt{L2:g\_ptc := -g\_ptc}
   };  
 \node[box,below=of b3] (bf) (b5) 
   {
    $\mtt{B_5}$ 
    \nodepart{second} 
    \ttt{L1:return g\_authenticated}
   };  
 \draw[->,thick] (b1) -- node{T} (b2);
 \draw[->,thick] (b1.west) -- ++(-0.5,0) -| ++(0,-4) node[xshift=0.2cm]{F} |- (b5.west);  
 \draw[->,thick] (b2) -- node{T} (b3);   
 \draw[->,thick] (b2.east) -- ++(1.5,0) node[yshift=0.2cm]{F} -| (b4);   
 \draw[->,thick] (b3) -- (b5);   
 \draw[->,thick] (b4) -- ++(0,-1.5) |- (b5.east);   
\end{tikzpicture}
}
\caption{CFG for the \ttt{verifyPIN} function.}
\label{fig:verifyPIN-cfg}
\end{figure}
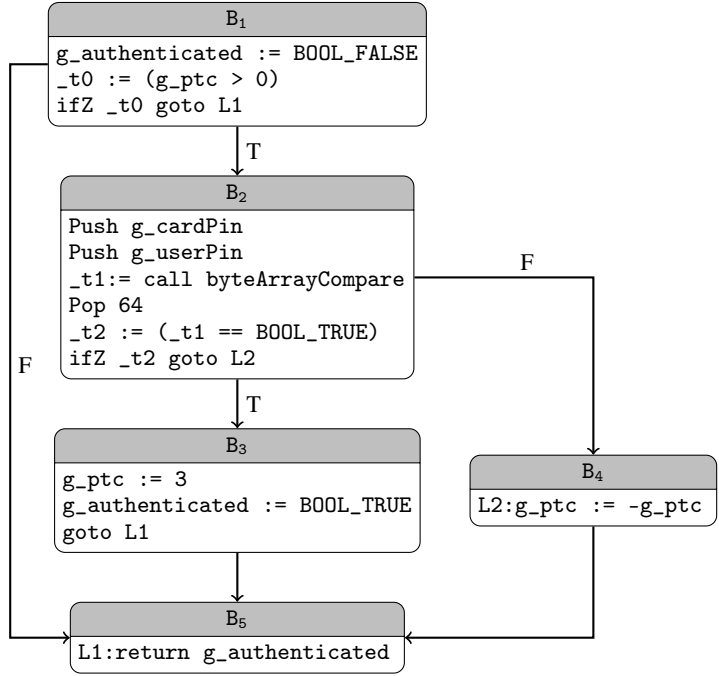

We define a \emph{test}, in TAC, as an instruction that involves a logical expression, \ie an instruction of the form $\mtt{z:=(x \oprel y)}$ where $\oprel$ is a logical operator, followed by the conditional branch instruction $\mtt{\mbox{\ttt{ifZ} } z \mbox{ \ttt{goto} } L}$ for some label \ttt{L}. For a test $\mtt{A}$ we use the notation $\mtt{cond(A)}$ to refer to the condition (\ie the logical expression) involved in $\mtt{A}$. 
\begin{example}[Test]
Function \ttt{verifyPIN} contains two tests:
\begin{itemize}
\item	
\ttt{\_t0 := (g\_ptc > 0)}, \ttt{ifZ \_t0 goto L1}  
\item	
\ttt{\_t2 := (\_t1 == BOOL\_TRUE)}, \ttt{ifZ \_t2 goto L2}. 
\end{itemize}
\end{example}
Let $\B$ be a basic block. We use the notation $\mtt{B_L}$ to refer to the successor of $\B$ whose head is labeled by \ttt{L} if the tail of $\B$ is an unconditional jump instruction \ttt{goto L}. 
Similarly, we use the notation $\mtt{B_T}$ (\emph{resp.} $\mtt{B_F}$) to refer to 
the successor of $\B$ that is executed when $\mtt{(x \oprel y)=True}$ (\emph{resp.} $\mtt{(x \oprel y)=False}$) if $\B$ ends with a test $\mtt{A}$ where $\mtt{cond(A)}=\mtt{(x \oprel y)}$. 
Note that $\mtt{B_F}$ is the basic block whose head is labeled by \ttt{L}. 

Finally, in what follows, variable $\mtt{i}$ is used to denote the unique identifier of a basic block $\B$, and $\mtt{i_{\square}}$ is used to denote the unique identifier of a basic block $\B_{\square}$. 

\section{Modeling}
\label{sec:modeling}
We define our execution model in Section~\ref{sec:execution}. 
Then, in Section~\ref{sec:attacker-model}, we define an attacker model for test inversion and jump attacks. 
\subsection{Modeling Execution} 
\label{sec:execution} 
We define a program execution (a program run) by a finite sequence of events, a 
trace. Such event-based modeling of program executions is appropriate for monitoring actual events of the program.  
Events mark important steps in an execution. We consider parametric events
of the form $\mtt{e(p_1, \ldots, p_n)}$, where $\mtt{e}$ is the event name, and $
\mtt{p_1, \ldots, p_n}$ is the list of symbolic parameters that take some concrete  values at runtime. 

As we consider fault injection attacks, then events themselves are under threat  
(\eg an attacker may skip an event emission). 
Skipping an event may result in a monitor reporting a false attack. 
In order to ensure events emission, we assume that the program under verification is instrumented so that every event is consecutively emitted twice. 

We define the following events which have to be emitted, two consecutive times each, during a program execution, where $\mtt{G}$ = ($\mtt{V, E}$) is the CFG of the program: 
\begin{itemize}
  \item
  For every basic block $\mtt{\B\in V}$, event $\mtt{begin(i)}$ has to be emitted at the beginning of $\B$. 
  \item
  For every basic block $\mtt{\B\in V}$, event $\mtt{end(i)}$ has to be emitted:  
      \begin{itemize}
        \item 
        just before instruction \ttt{return}, if the tail of $\B$ is \ttt{return}. 
        \item 
        at the beginning of $\mtt{B_L}$, if the tail of $\B$ is an unconditional jump instruction \ttt{goto L}. 
        Note that, in this case, events $\mtt{end(i)}$ have to be emitted before event $\mtt{begin(i_L)}$. 
        \item 
        at the beginning of both $\mtt{B_T}$ and $\mtt{B_F}$, if the tail of $\B$ is a conditional jump instruction $\mtt{\mbox{\ttt{ifZ} } z \mbox{ \ttt{goto} } L}$.   
        Note again that, in this case, events $\mtt{end(i)}$ have to be emitted before events $\mtt{begin(i_T)}$ and $\mtt{begin(i_F)}$. 
        \item 
        at the end of $\B$, otherwise.  
      \end{itemize}
  \item 
  For every loop 
  $\mtt{B_1, \ldots, B_n}$ in graph $\mtt{G}$, 
  events $\mtt{reset(i_1), \ldots}$, $\mtt{reset(i_n)}$ have to be emitted at the end of $\mtt{B_n}$. %, where $\mtt{i_j}$ for $\mtt{j \in \{1, \ldots, n\}}$ is a unique identifier of $\mtt{B_j}$.   
  Event $\mtt{reset(i)}$ means that the basic block whose identifier is $\mtt{i}$ may be executed again as it is involved in a loop.  
  Note that, in this case, $\mtt{reset}$ events have to be emitted before event $\mtt{end(i_n)}$ as the latter is used to detect jump attacks.  
  \item 
  For every basic block $\mtt{\B\in V}$ that ends with a test $\mtt{A}$,  
  events $\mtt{bT(i,x,y)}$ and $\mtt{bF(i,x,y)}$ have to be emitted at the beginning of $\mtt{B_T}$ and $\mtt{B_F}$, respectively, where $\mtt{cond(A)}=\mtt{(x \opreli y)}$. 
  We note that, in this case, the identifier $\mtt{i}$ also identifies the test $\mtt{A}$ and the logical operator $\opreli$ as a basic block can contain at most one test.  
\end{itemize}

We define a program execution as follows. 
\begin{definition}{\bf (Program Execution).}
Let $\mtt{P}$ be a program. 
An execution $\mtt{P_{exec}}$ of $\mtt{P}$ is a finite sequence of events $\mtt{e_1. \cdots . e_n}$, where $\mtt{n}\in\mathbb{N}$, such that $\mtt{e_j}\in\Sigma_{\mtt{ALL}}$ = $\{\mtt{begin(i)}$, $\mtt{end(i)}$, $\mtt{reset(i)}$, $\mtt{bT(i,x,y)}$, $\mtt{bF(i,x,y)}\}$ for every  $\mtt{j \in \{1, \ldots, n\}}$.
\end{definition}
For an execution $\mtt{P_{exec}}$, we use the functional notation $\mtt{P_{exec}(i)}$ to refer to the trace obtained from $\mtt{P_{exec}}$ by considering only the related events depending on the values taken by $\mtt{i}$. 
Indeed, $\mtt{P_{exec}(i)}$ contains only the event related to the basic block identified by $\mtt{i}$.

\subsection{Modeling Attacker}
\label{sec:attacker-model}
We focus on test inversion and jump attacks.  
A test inversion attack is an attack where the result of a test is inverted. 
Whereas, a jump attack is an attack that directs the control flow of a program execution in a way that results in a path that does not exist inside the CFG of the program. 

Test inversion and jump attacks can be performed using physical means~\cite{pieee/Bar-ElCNTW06}, such as voltage and clock glitches, and electromagnetic and laser perturbations, to disturb program executions.
An attack can also result from transient errors or malicious software.  

We consider the multiple fault injection model for test inversion attacks, whereas we consider the single fault injection model for jump attacks. 
Indeed, the scenario where a jump from a basic block $\B$ into a basic block $\B'$ that is directly followed by a jump from $\B'$ into $\B$ may not be detected by our monitors. 
Note that the limitation of our monitors in detecting jump attacks in case of multiple fault injections is restricted to the case where the injections result in multiple jump attacks. 
Nevertheless, scenarios where there is only one jump attack and (possibly) other attacks, such as test inversion attack and event skip attack (provided that at most one occurrence of an event is skipped), can be detected by our monitors. 

Furthermore, we assume that the attacker can skip at most one of the two consecutive occurrences of an event. 
Otherwise, monitors may not receive all the necessary events to output correct verdicts. 

%---------------------------------
\paragraph*{Test inversion attack.}  
  Consider a basic block $\B$ that ends with a test $\mtt{A}$ = $\mtt{z:=(x \oprel y)}$, $\mtt{\mbox{\ttt{ifZ} } z \mbox{ \ttt{goto} } L}$. 
  There is a test inversion attack on $\mtt{A}$ when  $\mtt{B_T}$ is executed when $\mtt{(x \oprel y)=False}$, or when $\mtt{B_F}$ is executed when $\mtt{(x \oprel y)=True}$. 
  In practice, the result of $\mtt{A}$ can be inverted, for example, by:  
  \begin{itemize}
  \item 
  skipping the conditional branch instruction, so that $\mtt{B_T}$ is executed regardless whether $\mtt{(x \oprel y)}$ evaluates to $\mtt{True}$ or $\mtt{False}$),   
  \item 
  skipping the instruction that involves the logical expression provided that variable $\mtt{z}$ already holds the value that results in branch inversion, or 
  \item 
  flipping the value of $\mtt{z}$ after the logical expression being evaluated.  
  \end{itemize}

  \begin{definition}{\bf (Test Inversion Attack).}
  Let $\mtt{P}$ be a program, and let $\mtt{P_{exec}}$ = $\mtt{e_1, \ldots, e_n}$ be an execution of $\mtt{P}$.  
  We say that there is a test inversion attack on $\mtt{P_{exec}}$ if it violates $\mtt{R_1}$ or $\mtt{R_2}$ which are defined as follows, where $\mtt{i}$ identifies $\opreli$: 
  \begin{itemize}
  \item 
  $\mtt{R_1}$: for every $\mtt{j}$, if $\mtt{e_j} = \mtt{eT(i,x,y)}$   then $\mtt{(x \opreli y)= True}$. 
  \item
  $\mtt{R_2}$: for every $\mtt{j}$, if $\mtt{e_j} = \mtt{eF(i,x,y)}$ then $\mtt{(x \opreli y) = False}$.
  \end{itemize}  
  \end{definition}

%----------------------- 
\paragraph*{Jump attack.}   
  In our model, a jump attack interrupts an execution of a basic block, starts an execution of a basic block not at its first instruction, or results in an edge that does not exist in the CFG.    
  In practice, a jump attack can be performed, for example, by manipulating the target address of a branch or return. 
  Note that we do not consider intra-basic block jumps (which are equivalent to skipping one or more instruction inside the same basic block). 

  Let $\B$ be a basic block, and consider only events $\mtt{begin(i)}$ and $\mtt{end(i)}$.  
  Then, in the absence of jump attacks, an execution of $\B$ results in one of the following traces depending on whether none, one, or two events are skipped (assuming events duplication):   
  \begin{itemize}
    \item 
    $\mtt{tr_1=begin(i)}$.$\mtt{begin(i)}$.$\mtt{end(i)}$.$\mtt{end(i)}$, 
    \item 
    $\mtt{tr_2=begin(i)}$.$\mtt{end(i)}$.$\mtt{end(i)}$, 
    \item 
    $\mtt{tr_3=begin(i)}$.$\mtt{begin(i)}$.$\mtt{end(i)}$, or 
    \item 
    $\mtt{tr_4=begin(i)}$.$\mtt{end(i)}$. 
  \end{itemize}
  During a program execution, $\B$ may get executed more than once only if it is involved in a loop. In this case, between every two executions of $\B$ event $\mtt{reset(i)}$ should be emitted.

  \begin{definition}{\bf (Jump Attack).}\label{def:jump-attack}
  Let $\mtt{P}$ be a program, and let $\mtt{P_{exec}}$ be an execution of $\mtt{P}$. 
  Let $\mtt{P^J_{exec}(i)}$ = $\mtt{e_1, \ldots, e_n}$ denote the trace obtained from $\mtt{P_{exec}(i)}$ by filtering out all the events that are not in 
  $\Sigma_{\mtt{J}} =\{\mtt{begin(i)},\mtt{end(i)}, \mtt{reset(i)}\}$. 
  We say that there is a jump attack on $\mtt{P_{exec}}$ if there exists $\mtt{i}$ such that $\mtt{P^J_{exec}(i)}$ violates $\mtt{R_3}$, $\mtt{R_4}$ or $\mtt{R_5}$, which are defined as follows: 
  \begin{itemize}
  \item 
  $\mtt{R_3}$: for every $\mtt{j}$, if $\mtt{e_j} = \mtt{begin(i)}$ then 
    \begin{itemize}
      \item $\mtt{e_{j+1}} = \mtt{end(i)}$, if $\mtt{e_{j-1}} = \mtt{begin(i)}$. 
      \item $\mtt{e_{j+1}} = \mtt{end(i)}$, or $\mtt{e_{j+1}} = \mtt{begin(i)}$ and $\mtt{e_{j+2}} = \mtt{end(i)}$, if $\mtt{e_{j-1}} \neq \mtt{begin(i)}$. 
    \end{itemize}
  \item 
  $\mtt{R_4}$: for every $\mtt{j}$, if $\mtt{e_j} = \mtt{end(i)}$ then 
    \begin{itemize}
      \item $\mtt{e_{j-1}} = \mtt{begin(i)}$, if $\mtt{e_{j+1}} = \mtt{end(i)}$. 
      \item $\mtt{e_{j-1}} = \mtt{begin(i)}$, or $\mtt{e_{j-1}} = \mtt{end(i)}$ and $\mtt{e_{j-2}} = \mtt{begin(i)}$, if $\mtt{e_{j+1}} \neq \mtt{end(i)}$. 
    \end{itemize} 
  \item 
  $\mtt{R_5}$: there is no $\mtt{j}$ such that $\mtt{e_j} = \mtt{end(i)}$ and $\mtt{e_{j+1}} = \mtt{begin(i)}$.     
  \end{itemize}
  \end{definition}

  Definition~\ref{def:jump-attack} considers the jump attacks that result in executions that cannot be built by concatenating elements from \{$\mtt{tr_1}$, $\mtt{tr_2}$, $\mtt{tr_3}$, $\mtt{tr_4}$, $\mtt{reset(i)}$\}. 
  Namely, it considers the following attacks: 
  \begin{itemize}
  \item 
  Any attack that interrupts an execution of a basic block $\B$. This attack results in one or two consecutive occurrences of event $\mtt{begin(i)}$ that is not directly followed by event $\mtt{end(i)}$, which violates requirement $\mtt{R_3}$ of Definition~\ref{def:jump-attack}. 
  \item
  Any attack that starts the execution of a basic block $\B$ not from its beginning. This attack results in event $\mtt{end(i)}$ that is not directly preceded by event $\mtt{begin(i)}$, which violates requirement $\mtt{R_4}$ of Definition~\ref{def:jump-attack}. 
  \item
  Any attack that performs a backward jump from the end of a basic block $\B_2$ into the beginning of a basic block $\B_1$ (\ie the execution already went through $\B_1$ before reaching $\B_2$) such that there is no edge from $\B_2$ to $\B_1$ inside the related CFG.  
  This attack results in two executions of $\B_1$ that are not separated by, at least, an emission of event $\mtt{reset(i)}$. 
  Thus, it results in event $\mtt{end(i)}$ that is directly followed by event $\mtt{begin(i)}$, which violates requirement $\mtt{R_5}$ of Definition~\ref{def:jump-attack}. 
  Note that similar forward jumps are not considered by Definition~\ref{def:jump-attack} as they do not violate $\mtt{R_3}$, $\mtt{R_4}$ nor $\mtt{R_5}$. 
  \end{itemize}
  
  Finally, we note that a trace $\mtt{P^J_{exec}(i)}$ that starts with event $\mtt{reset(i)}$ or contains more than two consecutive occurrences of event $\mtt{reset(i)}$ does violate any of the requirements $\mtt{R_3}$, $\mtt{R_4}$ and $\mtt{R_5}$. 
  However, such a trace is produced only if there is a basic block $\B'$, with $\mtt{i' \neq i}$, that is executed not from its beginning. 
  Consequently, $\mtt{P^J_{exec}(i')}$ violates $\mtt{R_4}$ in this case, and thus the attack that can result in such situation is considered by Definition~\ref{def:jump-attack}. 

\begin{example}[Number of Events]
In order to check \ttt{verifyPIN} for test inversion attacks, it has to be instrumented to produce 8 events (4 $\mtt{bT(i,x,y)}$ events and 4 $\mtt{bF(i,x,y)}$ events) since \ttt{verifyPIN} contains two tests and every event has to be emitted twice. 
Whereas, to check \ttt{verifyPIN} for jump attacks, it has to be instrumented to produce 24 events (10 $\mtt{begin(i)}$ events and 14 $\mtt{end(i)}$ events) since \ttt{verifyPIN} has 5 basic blocks, contains two conditional branches, and every event has to be emitted twice.  
\end{example}

\section{Monitors}
\label{sec:monitors}
%%%%%%%%%%%%%%%%%%%%%%%%%%%%%%%%%%%%%%%%%%%%%%%%%%%%%%%%%%%%%%%%%%%%%%%%%%%%%%%%%%%%%%%
%%%%%%%%%%%%%%%%%%%%%%%%%%%%%%%%%%%%%%%%%%%%%%%%%%%%%%%%%%%%%%%%%%%%%%%%%%%%%%%%%%%%%%%
%
We propose monitors that check for the presence/absence of test inversion and jump attacks on a given execution.

We assume that each event is consecutively emitted twice.
Note that, in the absence of event skip attack, our monitors can still detect test inversion and jump attacks if each event is emitted only once\footnote{A smaller monitor can be used for jump attacks in this case, see Figure~\ref{fig:qea-jump}.}.
%
%However, false positive (jump) attack may be reported when an event is skipped.
%
Note also that our monitors can be easily modified to report event skip attack by using a variable to count the number of received events or by tracing the visited states\footnote{An additional state has to be added to $\mtt{M_{TI}}$ in this case, see Figure~\ref{fig:qea-test-inversion}.}.
%\todo{What about unsuccessful attacks?}
%-----------------------------------------------------
%%%%%%%%%%%%%%%%%%%%%%%%%%%%%%%%%%%%%%%%%%%%%%%%%%%%%%%%%%%%%%%%%%%%%%%%%%%%%%%%%%%%%%%
\subsection{A Monitor for Detecting Test Inversions}
%%%%%%%%%%%%%%%%%%%%%%%%%%%%%%%%%%%%%%%%%%%%%%%%%%%%%%%%%%%%%%%%%%%%%%%%%%%%%%%%%%%%%%%
%
Figure~\ref{fig:qea-test-inversion} depicts monitor $\mtt{M_{TI}}$, a QEA that detects test inversion attacks on a given execution. The alphabet of $\mtt{M_{TI}}$ is $\varSigma_{\mtt{TI}} =\{\mtt{eT(i,x,y)}$, $\mtt{eF(i,x,y)}\}$.
Monitor $\mtt{M_{TI}}$ has only one state, which is an accepting square state. It fails when event $\mtt{eT(i,x,y)}$ is emitted while $\mtt{(x \opreli y)= False}$ (\ie if the requirement $\mtt{R_1}$ is violated), or when event $\mtt{eF(i,x,y)}$ is emitted while $\mtt{(x \opreli y) = True}$ (\ie if the requirement $\mtt{R_2}$ is violated).
$\mtt{M_{TI}}$ accepts multiple occurrences of events $\mtt{eT(i,x,y)}$ and $\mtt{eF(i,x,y)}$ as long as the related guards hold.
Note that parameter $\mtt{i}$ is used to identify $\mtt{oprel^i}$ and it allows reporting the test that has been inverted in case of failure.

\begin{figure}[t]
  \centering
  \begin{tikzpicture}
    \tikzstyle{main}=[circle,thick,draw,fill=white,minimum size=6mm]
    \tikzstyle{failure}=[rectangle,thick,draw,fill=white,minimum size=6mm]
    \node[] (quan) at (-2.25,1.5)  {$\forall \;\mtt{i}$};
    \node[failure, initial, fill= gray] (1) at (0,0)  {1};
    \draw[->,>=latex',loop above] (1) to node {$\mtt{eT(i,x,y)}~\frac{\mathtt{(x \opreli y) == True}}{}$} (1);
    \draw[->, >=latex',loop right] (1) to node {$\mtt{eF(i,x,y)}~\frac{\mathtt{(x \opreli y) == False}}{}$}  (1);
    \draw[thick] ($ (current bounding box.south west) + (0,-0.25)$) rectangle ($(current
      bounding box.north east) + (0.25,0)$);
  \end{tikzpicture}
  \caption{$\mtt{M_{TI}}$, a QEA detecting test inversion attacks.} \label{fig:qea-test-inversion}
  \vspace{-1em}
\end{figure}
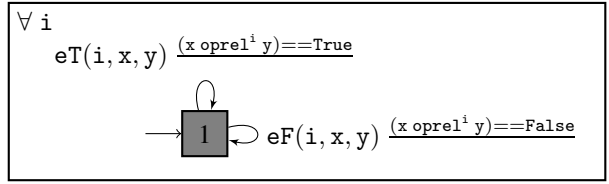

\begin{restatable}{proposition}{propositionTI}
  \label{prop:test_inversion}
  Let $\mtt{P}$ be a program and $\mtt{P_{exec}}$ an execution of $\mtt{P}$.
  Monitor $\mtt{M_{TI}}$ rejects $\mtt{P_{exec}}$ iff there is a test inversion attack on $\mtt{P_{exec}}$.
\end{restatable}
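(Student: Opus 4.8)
The plan is to unfold the QEA acceptance semantics for the single universally-quantified variable $\mtt{i}$ and reduce the whole statement to a per-slice analysis of the one-state event automaton $\mtt{M_{TI}(I)}$. Since $\mtt{M_{TI}}$ quantifies $\mtt{i}$ with $\forall$, by the semantics recalled in Section~\ref{sec:QEAs} the monitor rejects $\mtt{P_{exec}}$ exactly when some slice $\mtt{P_{exec}(I)}$ (for a value $\mtt{I}$ of $\mtt{i}$ occurring in the trace) is rejected by the instantiated automaton $\mtt{M_{TI}(I)}$. Thus it suffices to characterize, for a fixed $\mtt{I}$, when $\mtt{M_{TI}(I)}$ rejects its slice, and then to show that this matches a violation of $\mtt{R_1}$ or $\mtt{R_2}$ restricted to that $\mtt{I}$.

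First I would establish the per-slice characterization. The automaton $\mtt{M_{TI}(I)}$ has a single accepting \emph{square} state with two guarded self-loops: reading $\mtt{eT(I,x,y)}$ keeps it in the accepting state iff the guard $\mtt{(x \opreli y) == True}$ holds, and $\mtt{eF(I,x,y)}$ keeps it accepting iff $\mtt{(x \opreli y) == False}$ holds; because the state is closed-to-failure, any event whose guard fails triggers the implicit transition to the sink failure state. Hence $\mtt{M_{TI}(I)}$ rejects its slice iff the slice contains some $\mtt{eT(I,x,y)}$ with $\mtt{(x \opreli y) = False}$ or some $\mtt{eF(I,x,y)}$ with $\mtt{(x \opreli y) = True}$, which is precisely a violation of $\mtt{R_1}$ or $\mtt{R_2}$ witnessed at a position carrying identifier $\mtt{I}$. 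With this, both directions follow. For the ``if'' direction, a test inversion attack means $\mtt{P_{exec}}$ violates $\mtt{R_1}$ or $\mtt{R_2}$, i.e. there is a position $\mtt{e_j}=\mtt{eT(I,x,y)}$ with $\mtt{(x \opreli y)=False}$ (or the symmetric $\mtt{eF}$ case); this event belongs to slice $\mtt{P_{exec}(I)}$ and drives $\mtt{M_{TI}(I)}$ to failure, so $\mtt{M_{TI}}$ rejects by the $\forall$ semantics. For ``only if'', if $\mtt{M_{TI}}$ rejects then some slice $\mtt{P_{exec}(I)}$ is rejected; since the only way to leave the accepting state is the implicit failure transition, the slice must contain a guard-violating $\mtt{eT}$ or $\mtt{eF}$ event, yielding a violation of $\mtt{R_1}$ or $\mtt{R_2}$ and hence a test inversion attack.

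The step I expect to need the most care is justifying that slicing and event duplication do not interfere with the argument. The guards on $\mtt{M_{TI}}$ are \emph{memoryless}: whether a transition is enabled depends only on the parameters $\mtt{x,y}$ of the current event and not on the history, so a duplicated event is evaluated identically to its first occurrence and contributes no spurious rejection, and the order of events within or across slices is irrelevant. I would also make explicit that $\mtt{P_{exec}(i)}$ collects exactly the $\mtt{eT}/\mtt{eF}$ events carrying identifier $\mtt{i}$ (the monitor's alphabet filters out all other events of $\Sigma_{\mtt{ALL}}$), so that the per-$\mtt{I}$ rejection condition above lines up exactly with the quantifier ``for every $\mtt{j}$'' form of $\mtt{R_1}$ and $\mtt{R_2}$ in the attack definition.
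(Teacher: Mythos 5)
Your proposal is correct and follows essentially the same route as the paper's proof: both reduce rejection under the $\forall\,\mtt{i}$ semantics to some instantiation $\mtt{M_{TI}(i)}$ firing the implicit failure transition out of its single accepting square state, which happens exactly when a guard-violating $\mtt{eT}$ or $\mtt{eF}$ event occurs, i.e.\ exactly when $\mtt{R_1}$ or $\mtt{R_2}$ is violated. Your extra remarks on the memorylessness of the guards and on event duplication are sound but only make explicit what the paper leaves implicit.
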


%We note that, $\mtt{M_{TI}}$ can detect test inversion attacks under the multiple fault injection model. %However, in this report, we consider the single fault injection model since our monitor cannot detect some jump attacks under the multiple fault injection model (see below).

\begin{example}[Test Inversion Attack]
  \label{ex:ti-attack}
  An attacker can perform a test inversion attack on \ttt{verifyPIN} by skipping the second conditional branch: \ttt{ifZ \_t2 goto L2} (Line~10 of Listing~\ref{lst:verifyPIN-tac}).
  This attack allows the attacker to get authenticated with a wrong PIN.
  Assuming a wrong PIN, function \ttt{byteArrayCompare} returns a value \ttt{BOOL\_FALSE} (which is assigned to \ttt{\_t1}). Thus, value 0 is assigned to variable \ttt{\_t2} as a result of the logical instruction \ttt{\_t2:=(\_t1 == BOOL\_TRUE)}.
  At this point, if the conditional branch is skipped, the execution branches to $\B_3$ (the success branch) instead of $\B_4$ (the failure branch) which was supposed to be executed as the value of \ttt{\_t2} is 0.
  %
  %This attack is described in~\cite{claps-D2.1}. It was found by the vulnerability analysis of the function \texttt{verifyPIN} while considering only one instruction skip. 
  %
  Provided that \ttt{verifyPIN} is instrumented as described in Section~\ref{sec:execution}, the faulted execution after filtering out any event that is not in $\Sigma_{\mtt{IT}}$ is as follows, where $\mtt{I_j}$ is the identifier of $\mtt{B_j}$, and initially we have \ttt{g\_ptc=3}:
  \begin{align*}
    \mtt{eT(I_1,3,0)}.\mtt{eT(I_1,3,0)} & .\mtt{eT(I_2,BOOL\_FALSE,BOOL\_TRUE)} \\
                                        & .\mtt{eT(I_2,BOOL\_FALSE,BOOL\_TRUE)}
  \end{align*}
  Note that $\mtt{eT(I_1,3,0)}$ is emitted at the beginning of $\B_2$, the success branch of the first test. It takes the arguments $\mtt{I_1}$, 3, and 0 since the corresponding test is inside $\B_1$, and the involved condition is $\mtt{(g\_ptc > 0)}$ where $\mtt{g\_ptc=3}$.
  On the other hand,  $\mtt{eT(I_2,BOOL\_FALSE},\mtt{BOOL\_TRUE})$ is emitted at the beginning of $\B_3$, the success branch of the second test. It takes the arguments $\mtt{I_2}$, \ttt{BOOL\_FALSE}, and \ttt{BOOL\_TRUE} since the corresponding test is inside $\B_2$, and the involved condition is $\mtt{(\_t1 == BOOL\_TRUE)}$ where \ttt{byteArrayCompare} returns \ttt{BOOL\_FALSE} into \ttt{\_t1} (as \ttt{g\_userPin} is a wrong PIN).

  The faulted execution is sliced by $\mtt{M_{IT}}$, based on the values that $\mtt{i}$ can take, into the following two slices:
  \begin{small}
    \begin{align*}
      \mtt{i}\mapsto \mtt{I_1} & : \mtt{eT(I_1,3,0)}.\mtt{eT(I_1,3,0)}                                       \\
      \mtt{i}\mapsto \mtt{I_2} & : \mtt{eT(I_2,BOOL\_FALSE,BOOL\_TRUE)}.\mtt{eT(I_2,BOOL\_FALSE,BOOL\_TRUE)}
    \end{align*}
  \end{small}
  Slice $\mtt{i}\mapsto \mtt{I_1}$ satisfies both requirements $\mtt{R_1}$ and $\mtt{R_2}$, and thus it is accepted by $\mtt{M_{TI}(I_1)}$.
  While, slice $\mtt{i}\mapsto \mtt{I_2}$ does not satisfy the requirement $\mtt{R_1}$ as event $\mtt{eT(I_2,BOOL\_FALSE,BOOL\_TRUE)}$ is emitted but $\mtt{(BOOL\_FALSE == BOOL\_TRUE)= False}$, and thus it is rejected by $\mtt{M_{IT}(I_2)}$.
  Indeed, the occurrence of $\mtt{eT(I_2,BOOL\_FALSE,BOOL\_TRUE)}$ leads into an implicit failure state since the related guard is not satisfied.
  Therefore, since slice $\mtt{i}\mapsto \mtt{I_2}$ is rejected by $\mtt{M_{IT}(I_2)}$, the whole faulted execution is rejected by $\mtt{M_{IT}}$.
\end{example}
%
%---------------------------------------------------
%%%%%%%%%%%%%%%%%%%%%%%%%%%%%%%%%%%%%%%%%%%%%%%%%%%%%%%%%%%%%%%%%%%%%%%%%%%%%%%%%%%%%%%
\subsection{A Monitor for Detecting Jump Attacks}
%%%%%%%%%%%%%%%%%%%%%%%%%%%%%%%%%%%%%%%%%%%%%%%%%%%%%%%%%%%%%%%%%%%%%%%%%%%%%%%%%%%%%%%
%
Figure~\ref{fig:qea-jump} depicts monitor $\mtt{M_J}$, a QEA that detects jump attacks on a given program execution.
The alphabet of $\mtt{M_J(i)}$ is
$\varSigma_{\mtt{J}} =\{\mtt{begin(i)},\mtt{end(i)}, \mtt{reset(i)}\}$.
Monitor $\mtt{M_J}$ covers every basic block $\mtt{i}$ inside the CFG of the given program.
An instantiation of $\mtt{i}$ results in the EA $\mtt{M_J(i)}$.
Note that $\mtt{M_{J}}$ assumes a single fault injection model.

\begin{figure*}[h]
  \centering
  \begin{tikzpicture}
    \tikzstyle{main}=[circle,thick,draw,fill=white,minimum size=6mm]
    \tikzstyle{failure}=[rectangle,thick,draw,fill=white,minimum size=6mm]
    \node[] (quan) at (-1.5,1.25)  {$\forall \;\mtt{i}$};
    \node[failure, initial, fill= gray] (1) at (0,0)  {1};
    \node[failure] (2) at (3,0)  {2};
    \node[failure] (3) at (6,0)  {3};
    \node[failure, fill= gray] (4) at (9,0)  {4};
    \node[failure, fill= gray] (5) at (12,0) {5};
    %\node[failure, fill= gray] (4) at (9,0) {4};
    %
    \draw[->,>=latex'] (1) to node [above] {$\mathtt{begin(i)}$} (2);
    \draw[->,>=latex'] (2) to node [above] {$\mathtt{begin(i)}$} (3);
    \draw[->,>=latex'] (3) to node [above] {$\mathtt{end(i)}$}  (4);
    \draw[->,>=latex'] (4) to node [above] {$\mathtt{end(i)}$} (5);
    \draw[->,>=latex'] (2) -- ++(0,1) |-  ++(6,0) node[above,xshift=-3cm] {$\mtt{end(i)}$}  -|  (4);
    % \draw[->,>=latex'] (3) to node [above] {\ttt{exit}} (4); 
    % \draw[->,>=latex',loop above] (4) to node {\ttt{exit}} (4);
    %%%
    \draw[->,>=latex',loop above] (1) to node {$\mtt{resetB(i)}$} (1);
    \draw[->,>=latex'] (4) -- ++(0,-0.75) |-  ++(-9,0) node[above,xshift=4.5cm] {$\mtt{reset(i)}$}  -|  (1.south);
    \draw[->,>=latex'] (5) -- ++(0,-1.25) |-  ++(-12.2,0) node[above,xshift=6cm] {$\mtt{reset(i)}$}  -|  ([xshift=-0.2cm]1.south);
    \draw[thick] ($ (current bounding box.south west) + (0,-0.25)$) rectangle ($(current
      bounding box.north east) + (0.25,0)$);
  \end{tikzpicture}
  \caption{$\mtt{M_{J}}$, a QEA detecting jump attacks.}%the requirements $\mtt{R_3}$, $\mtt{R_4}$ and $\mtt{R_5}$.} 
  \label{fig:qea-jump}
\end{figure*}
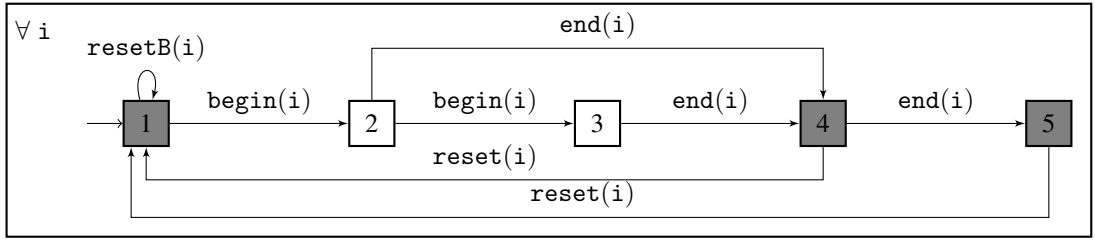

\begin{restatable}{proposition}{propositionJump}
  \label{prop:jump}
  Let $\mtt{P}$ be a program, and let $\mtt{P_{exec}}$ be an execution of $\mtt{P}$.
  Monitor $\mtt{M_J}$ rejects $\mtt{P_{exec}}$ iff there is a jump attack on $\mtt{P_{exec}}$. %$\mtt{P_{exec}}$ violates $\mtt{R_3}$, $\mtt{R_4}$ or $\mtt{R_5}$.  
\end{restatable}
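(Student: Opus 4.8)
The plan is to reduce the biconditional to a statement about a single slice and then characterize the language of the corresponding event automaton. By the $\forall\,\mtt{i}$ semantics of QEAs recalled in Section~\ref{sec:QEAs}, $\mtt{M_J}$ rejects $\mtt{P_{exec}}$ iff there exists a value of $\mtt{i}$ for which the EA $\mtt{M_J(i)}$ rejects the slice obtained by keeping only the events matching that value and lying in $\Sigma_{\mtt{J}}$; this slice is exactly $\mtt{P^J_{exec}(i)}$ of Definition~\ref{def:jump-attack}. Since a jump attack is by definition the existence of an $\mtt{i}$ with $\mtt{P^J_{exec}(i)}$ violating $\mtt{R_3}$, $\mtt{R_4}$ or $\mtt{R_5}$, it suffices to prove, for each fixed $\mtt{i}$, that $\mtt{M_J(i)}$ accepts $\mtt{P^J_{exec}(i)}$ iff $\mtt{P^J_{exec}(i)}$ satisfies all of $\mtt{R_3}$, $\mtt{R_4}$, $\mtt{R_5}$.

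The backbone is a state invariant, proved by induction on the length of a prefix $\mtt{w}$ of $\mtt{P^J_{exec}(i)}$ that has not driven $\mtt{M_J(i)}$ into the implicit sink. It records the suffix of $\mtt{w}$ after its last $\mtt{reset(i)}$: state $1$ corresponds to the empty suffix (prefix empty or ending in $\mtt{reset(i)}$), state $2$ to $\mtt{begin(i)}$, state $3$ to $\mtt{begin(i)}.\mtt{begin(i)}$, state $4$ to $\mtt{tr_4}$ or $\mtt{tr_3}$, and state $5$ to $\mtt{tr_2}$ or $\mtt{tr_1}$. Each inductive step reads off the unique outgoing transition; because all states are square, the absence of a transition for the current event means the sink (rejection) is entered.

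For the direction ``rejection $\Rightarrow$ violation'', I would do an exhaustive case analysis on the rejection cause, of which there are two kinds: hitting the sink on some event $\mtt{e_j}$, or ending the run in a non-accepting state ($2$ or $3$). For each missing transition I use the invariant to pin down $\mtt{e_{j-1}}$ and $\mtt{e_{j-2}}$ (always a $\mtt{reset(i)}$ or the trace start when at state $1$) and exhibit the precise failing clause: a stray $\mtt{end(i)}$ at state $1$ or a third $\mtt{end(i)}$ at state $5$ violates $\mtt{R_4}$; a third $\mtt{begin(i)}$ at state $3$, or a $\mtt{reset(i)}$ interrupting states $2$ or $3$, violates $\mtt{R_3}$; a $\mtt{begin(i)}$ immediately after an $\mtt{end(i)}$ at states $4$ or $5$ violates $\mtt{R_5}$; and ending in state $2$ or $3$ leaves a dangling $\mtt{begin(i)}$ whose required successor $\mtt{end(i)}$ is absent, violating $\mtt{R_3}$. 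For the converse ``acceptance $\Rightarrow$ satisfaction'', the invariant shows that an accepted trace (run ending in $1$, $4$ or $5$ without hitting the sink) decomposes as a concatenation of the blocks $\mtt{tr_1},\ldots,\mtt{tr_4}$ separated and optionally padded by $\mtt{reset(i)}$ events; I would then verify $\mtt{R_3}$, $\mtt{R_4}$, $\mtt{R_5}$ on every such concatenation by checking the finitely many local windows at block boundaries, which also covers the benign cases noted after Definition~\ref{def:jump-attack} (leading resets, repeated resets).

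I expect the main obstacle to be the boundary bookkeeping in the ``rejection $\Rightarrow$ violation'' direction: the clauses of $\mtt{R_3}$--$\mtt{R_5}$ branch on the look-behind and look-ahead events $\mtt{e_{j\pm 1}}$, $\mtt{e_{j\pm 2}}$, and one must argue carefully that a nonexistent neighbor (trace start or end) or a $\mtt{reset(i)}$ neighbor counts as both ``$\neq \mtt{begin(i)}$'' and ``$\neq \mtt{end(i)}$'', so that the intended clause is the one actually violated. Lining up the index arithmetic with the invariant---in particular confirming that at state $5$ the two preceding events are both $\mtt{end(i)}$, and that at state $1$ the preceding event is never $\mtt{begin(i)}$ or $\mtt{end(i)}$---is where the argument must be most precise.
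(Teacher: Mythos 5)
Your proposal is correct, and its first direction coincides with the paper's: the paper also reduces rejection of $\mtt{M_J}$ to failure of some $\mtt{M_J(i)}$ on the slice $\mtt{P^J_{exec}(i)}$ and then enumerates the failure causes (a sink transition on some event, or termination in the non-accepting states (2) or (3)), using exactly the reachability facts your invariant encodes (\eg ``state (2) can only be reached from state (1) through $\mtt{begin(i)}$'') to pin down $\mtt{e_{j-1}}$ and $\mtt{e_{j-2}}$ and exhibit the violated clause of $\mtt{R_3}$--$\mtt{R_5}$. Where you genuinely diverge is the second direction. The paper proves ``violation $\Rightarrow$ rejection'' head-on: it cases on which of $\mtt{R_3}$, $\mtt{R_4}$, $\mtt{R_5}$ is violated, then sub-cases on the disjuncts of that requirement, then on the state $\mtt{M_J(i)}$ occupies before the offending event, simulating the automaton forward to a failure in each of the roughly two dozen resulting branches. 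You instead take the contrapositive and characterize the accepted language: your state invariant shows an accepted slice is a concatenation of $\mtt{tr_1},\ldots,\mtt{tr_4}$ blocks separated (and padded) by $\mtt{reset(i)}$ events, after which $\mtt{R_3}$--$\mtt{R_5}$ are verified by a finite local check at block boundaries, since each requirement only inspects a bounded window $\mtt{e_{j\pm 1}},\mtt{e_{j\pm 2}}$. The two are logically equivalent, but your route buys a substantially shorter and less error-prone argument for that direction (one language lemma plus finitely many window checks, versus the paper's deeply nested case tree), and your explicit induction makes rigorous the reachability claims the paper asserts informally; the paper's version, in exchange, is more elementary and traces each violation concretely to the transition that kills the run. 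Your flagged concern about boundary bookkeeping (treating a missing neighbor or a $\mtt{reset(i)}$ neighbor as both $\neq\mtt{begin(i)}$ and $\neq\mtt{end(i)}$) is exactly the convention the paper adopts, so no gap arises there.
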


Indeed, $\mtt{M_{J}}$ cannot output a final verdict concerning $\mtt{R_1}$ until the end of the execution as event $\mtt{end(i)}$ may occur at any point in the future.
One way to explicitly catch the end of an execution is to include event \ttt{exit} in $\varSigma_{\mtt{J}}$, and add a transition from state (5), labeled by \ttt{exit}, to a new accepting square state, say state (6).
Then, an occurrence of event \ttt{exit} in state (3) means that event $ \mtt{end(i)}$ will definitely not occur, and thus leads to an implicit failure state.
A self-loop on state (4) labeled by \ttt{exit} is also required. % in order to have a stutter-invariant monitor. 
Note that an occurrence of event \ttt{exit} at state (1) will also lead to failure.

\begin{example}[Jump Attack]
  \label{ex:j-attack}
  An attacker can perform a jump attack on \ttt{verifyPIN} by modifying the return address of the function \ttt{byteArrayCompare}, for example, into the address of the first instruction (\ttt{g\_ptc := 3}) of the basic block $\B_3$ (see Figure~\ref{fig:verifyPIN-cfg}).
  This attack interrupts the execution of $\B_2$, and allows the attacker to get authenticated with a wrong PIN as it skips the test that follows \ttt{byteArrayCompare}.
  Consequently, $\B_3$ (the success branch) will be executed regardless of the value returned by \ttt{byteArrayCompare}.
  Provided that \ttt{verifyPIN} is instrumented as described in Section~\ref{sec:execution}, the faulted execution after filtering out any event that is not in $\Sigma_{\mtt{J}}$ is as follows, where $\mtt{I_j}$ is the identifier of $\mtt{B_j}$:
  {\fontsize{8.7}{0} \selectfont
  \begin{align*}
     & \mtt{begin(I_1)}.\mtt{begin(I_1)}.\mtt{end(I_1)}.\mtt{end(I_1)}.\mtt{\bf begin(I_2)}.\mtt{\bf begin(I_2)}.\mtt{begin(I_3)} \\
     & .\mtt{begin(I_3)}.\mtt{end(I_3)}.\mtt{end(I_3)}.\mtt{begin(I_5)}.\mtt{begin(I_5)}.\mtt{end(I_5)}.\mtt{end(I_5)}
  \end{align*}
  }
  %
  %This execution violates $\mtt{R_3}$ since it contains sequence $\mtt{begin(I_2)}$.$\mtt{begin(I_2)}$ that is not followed by event $\mtt{end(I_2)}$. Then it can be detected by the monitor $\mtt{M_{J}}$. 

  The faulted execution is sliced by $\mtt{M_{J}}$, based on the values that $\mtt{i}$ can take, into the following four slices:
  \begin{align*}
    \mtt{i}\mapsto \mtt{I_1} & : \mtt{begin(I_1)}.\mtt{begin(I_1)}.\mtt{end(I_1)}.\mtt{end(I_1)},            \\
    \mtt{i}\mapsto \mtt{I_2} & : \mtt{begin(I_2)}.\mtt{begin(I_2)},                                          \\
    \mtt{i}\mapsto \mtt{I_3} & : \mtt{begin(I_3)}.\mtt{begin(I_3)}.\mtt{end(I_3)}.\mtt{end(I_3)}, \text{and} \\
    \mtt{i}\mapsto \mtt{I_5} & : \mtt{begin(I_5)}.\mtt{begin(I_5)}.\mtt{end(I_5)}.\mtt{end(I_5)}.
  \end{align*}
  Slices $\mtt{i}\mapsto \mtt{I_1}$, $\mtt{i}\mapsto \mtt{I_3}$, and $\mtt{i}\mapsto \mtt{I_5}$ satisfy  requirements $\mtt{R_3}$, $\mtt{R_4}$ and $\mtt{R_5}$.
  Thus, they are respectively accepted by $\mtt{M_{J}(I_1)}$, $\mtt{M_{J}(I_3)}$, and $\mtt{M_{J}(I_5)}$.
  However, slice $\mtt{i}\mapsto \mtt{I_2}$ does not satisfy the requirement $\mtt{R_3}$ as it contains two consecutive occurrences of event $\mtt{begin(I_2)}$ that are not followed by event $\mtt{end(I_2)}$. Thus it is rejected by $\mtt{M_{J}(I_2)}$.
  Indeed, the first occurrence of event $\mtt{begin(I_2)}$ fires the transition from state (1) into state (2), and the second occurrence of $\mtt{begin(I_2)}$ fires the transition from state (2) into state (3), see Figure~\ref{fig:qea-jump}.
  Thus, $\mtt{M_{J}(I_2)}$ ends in state (3), which is a failure state.
  Therefore, since slice $\mtt{i}\mapsto \mtt{I_2}$ is rejected by $\mtt{M_{J}(I_2)}$, the whole faulted execution is rejected by $\mtt{M_{J}}$.
\end{example}
Note, given a CFG $\mtt{(V,E)}$, monitor $\mtt{M_J}$ cannot detect an attack where a forward jump from the end of $\mtt{B\in V}$ into the beginning of $\mtt{B'\in V}$ with $\mtt{(B,B')\notin E}$ is executed.
Indeed, in order to detect such a jump, a global monitor with a structure similar to the CFG is needed where basic blocks are replaced by EAs that resemble $\mtt{M_{J}(i)}$ with the adjustment of reset transitions in accordance with the loops.

\section{Monitor Validation}
\label{sec:casestudy}
We validate our monitors by demonstrating their effectiveness in detecting simulated attacks against $\mtt{P}$~=~\ttt{verifyPIN}. 
Following the initial C implementation, we have implemented \ttt{verifyPIN} and the monitors using Java and AspectJ\footnote{\url{www.eclipse.org/aspectj/}}.
We have instrumented \ttt{verifyPIN} at the source code level. 
More precisely, for every required event we have defined an associated function which is called inside \ttt{verifyPIN} at the positions where the event has to be emitted as specified in Section~\ref{sec:execution}. 
The associated functions are used to define pointcuts in AspectJ. 
When a function is called, \ie a pointcut is triggered, the corresponding event is fed to the running monitor. The monitor then makes a transition based on its current state and the received event, and reports a verdict. 
The code segments executed by the monitor (called advices) are woven within the original source files to generate the final source code that is compiled into an executable. 

For example, Listing~\ref{lst:monitorTI} depicts the Java implementation of $\mtt{M_{TI}}$. 
Two states are defined: \ttt{Ok} (accepting state) and \ttt{Error}  (failure state).  
Function \ttt{updateState} (Lines~4-19) takes an event and then, after evaluating the condition $\mtt{(x \oprel y)}$, it updates the \ttt{currentState}. 
If the monitor is in state \ttt{Ok}, the state is updated into \ttt{Error} if the received event is \ttt{eT} and the condition evaluates to false, or the received event is \ttt{eF} and the condition evaluates to true. 
Once state \ttt{Error} is reached, the monitor cannot exit from it. 
Function \ttt{currentVerdict} (Lines~21-26) emits verdict \ttt{CURRENTLY\_TRUE} if the \ttt{currentState} is \ttt{Ok}, whereas it emits verdict \ttt{FALSE} if the \ttt{currentState} is \ttt{Error}. 

In what follows, we illustrate about how we carried out our  experiment and the obtained results. 
The experiment was conducted using Eclipse 4.11 and Java JDK 8u181 on a standard PC (Intel Core i7 2.2 GHz, 16 GB RAM).
%

%-----------------------------
\subsection{Normal Executions} 
Providing events duplication, running instrumented \ttt{verifyPIN} in the absence of an attacker results in one of the following 3 executions depending on the values of \ttt{g\_ptc} and \ttt{g\_userPin}: 
\begin{itemize}
  \item 
  $\mtt{P_{exec_1}}$ which contains 10 events: 4 events \ttt{begin}, 4 events \ttt{end} and 2 events \ttt{eF} that result from executing $\B_1$ and $\B_5$ (see Figure~\ref{fig:verifyPIN-cfg}).  
  This execution is performed when \ttt{g\_ptc} $\leq 0$.  
  \item 
  $\mtt{P_{exec_2}}$ which contains 20 events: 8 events \ttt{begin}, 8 events \ttt{end}, 2 events \ttt{eT} and 2 events \ttt{eF} that result from executing $\B_1$, $\B_2$, $\B_4$ and $\B_5$.   
  This execution is performed when \ttt{g\_ptc} $> 0$ and \ttt{g\_userPin} $\neq$ \ttt{g\_cardPin}. 
  \item 
  $\mtt{P_{exec_3}}$ which contains 20 events: 8 events \ttt{begin}, 8 events \ttt{end} and 4 events \ttt{eT} that result from executing $\B_1$, $\B_2$, $\B_3$ and $\B_5$.   
  This execution is performed when \ttt{g\_ptc} $> 0$ and \ttt{g\_userPin} = \ttt{g\_cardPin}. 
\end{itemize}

\begin{lstlisting}[%
  float=tp,%
  style=JStyle,%
  caption={Java implementation of $\mtt{M_{TI}}$.},%
  captionpos=t,%
  label=lst:monitorTI%
  ]
public class VerificationMonitorTI {
 private State currentState = State.Ok;

 public void updateState(Event e) {
  switch (this.currentState) {
   case Ok:
    int x = e.getX();
    int y = e.getY();
    String oprel = e.getOprel();
    boolean condition = evaluateCond(x,y,oprel);
    if ((e.getName().equals("eT") && !condition) || (e.getName().equals("eF") && condition))
     { this.currentState = State.Error; }
    break;
   case Error:
    // No need to execute any code.
    break;
  }
  System.out.println("moved to "+ this.currentState);
 }

 public Verdict currentVerdict () {
  switch(this.currentState) {
   case Ok: return Verdict.CURRENTLY_TRUE;
   case Error: return Verdict.FALSE;
   default: return Verdict.FALSE;
  } 
 } 
} 
\end{lstlisting} 

Table~\ref{tab:overhead} summarizes the cumulative execution time and the memory footprint of 100K runs of $\mtt{P_{exec_3}}$: (i) without instrumentation, (ii) with instrumentation, and (iii) with instrumentation and the monitors.  
Note that the memory consumption can be bounded as a monitor processes only one event at a time, keeps track only of the current state, and are parametrized by the current executing block.\footnote{We verified empirically that the memory consumption is insensitive to the number of events. However, we did not report the numbers because of lack of space.}  

The time overhead depends on the size of the application and the number of events. 
The size of \ttt{verifyPIN} is small; hence the measured overhead represents an extreme and unfavorable situation. 
A more representative measure of the overhead should be done on larger applications, especially because our approach aims at protecting the critical parts of an application instead of protecting every single instruction. 
Moreover, we note that our implementation is not optimized yet, and that using AspectJ for instrumentation is not the best choice performance-wise. Instrumentation causes most of the overhead (see Table~\ref{tab:overhead}). In the future, using tools such as ASM~\cite{Kuleshov07usingthe} to directly instrument the bytecode would result in a smaller overhead. 
\begin{table}[htbp]
  \caption{Cumulative execution time and memory footprint of 100K runs of $\mtt{P_{exec_3}}$.}
  \label{tab:overhead}
  \centering 
  \begin{tabular}{|c|c|c|}
    \hline  
%    \rowcolor{gray!30}
    ~ &  \textbf{CPU Time (ms)} & \textbf{Memory (KB)} \\
    \hline \hline 
    \ttt{verifyPIN} & 164 & 284 \\ \hline 
    \rowcolor{gray!30}
    Inst. \ttt{verifyPIN} & 209 ($\times$ 1.27) & 796.6 ($\times$ 2.8) \\ \hline 
    Inst. \ttt{verifyPIN} \& Monitors & 228 ($\times$ 1.39) & 801 ($\times$ 2.82) \\  
  \hline 
\end{tabular}
\end{table}
%
%---------------------------------
\subsection{Test Inversion Attack} 
Function \ttt{verifyPIN} contains two tests.  
The first test: \ttt{\_t0 := (g\_ptc > 0)}, \ttt{ifZ \_t0 goto L1} is represented, in Java bytecode, using the instruction \ttt{ifle L1}. 
The instruction \ttt{ifle L1} compares \ttt{g\_ptc} and \ttt{0}, which are previously loaded into the stack, and performs a branch into \ttt{L1} if \ttt{g\_ptc} is less than or equal to 0.  
This test can be inverted by replacing \ttt{ifle L1} with \ttt{ifgt L1}, which performs a branch if \ttt{g\_ptc} is greater than 0.

Similarly, the second one: \ttt{\_t2 := (\_t1 == BOOL\_TRUE)}, \ttt{ifZ \_t2 goto L2} is represented using the instruction \ttt{if\_icmpne L2}, which compares \ttt{\_t1} and \ttt{BOOL\_TRUE}, and performs a branch into \ttt{L2} if they are not equal. 
This test can be inverted by replacing \ttt{if\_icmpne} with \ttt{if\_icmpeq}, which performs a branch if the operands are equal.

The binary opcodes of \ttt{ifle}, \ttt{ifgt}, \ttt{if\_icmpne} and \ttt{if\_icmpeq} are respectively ``1001 1110'', ``1001 1101'', ``1010 0000'' and ``1001 1111''.
Hence, replacing \ttt{ifle} with \ttt{ifgt} (\emph{resp.} \ttt{if\_icmpne} with \ttt{if\_icmpeq}) requires modifying 2 bits (\emph{resp.} 6 bits). 

Depending on the values of \ttt{g\_ptc} and \ttt{g\_userPin}, a test inversion attack can be used, e.g., to force authentication with a wrong PIN or to prevent the authentication with the correct PIN. 

As \ttt{verifyPIN} contains two tests, we consider the two following scenarios: 
\squishlist
  \item[\textbullet]
  In case of a wrong user PIN in the first trial (\ie \ttt{g\_ptc = 3}), inverting the second test results in a successful authentication. 
  This is the attack presented in Example~\ref{ex:ti-attack}. 
  %
  %We refer to this faulted execution by $\mtt{P_{exec_4}}$. 
  %
  Note that monitor $\mtt{M_{TI}}$ reports an attack after processing the first occurrence of event \ttt{eT} corresponding to the second test  as the related guard does not hold in this case. 
  That is after receiving 11 events: 4 events \ttt{begin}, 4 events \ttt{end} and 3 events \ttt{eT}. 
  Whereas, the full execution contains 20 events. 
  %Whereas, the full execution is similar to $\mtt{P_{exec_2}}$ and contains 20 events. 
  %
  \item[\textbullet]
  In case of a wrong user PIN in the fourth trial (\ie \ttt{g\_ptc = 0}), inverting both tests results in a successful authentication. 
  %
  %We refer to this faulted execution by $\mtt{P_{exec_4}}$, which is also similar to $\mtt{P_{exec_2}}$.
  %
  Again, $\mtt{M_{TI}}$ reports an attack after processing the first occurrence of event \ttt{eT}. 
  That is after receiving 5 events: 2 events \ttt{begin}, 2 events \ttt{end} and 1 events \ttt{eT}.
\squishend
Forcing the ``success branch'', in the scenarios above, can be also performed by replacing \ttt{ifle} and/or \ttt{if\_icmpne} with \ttt{nop}, which is equivalent to instruction skip, and thus results in the execution of the the ``success branch'' regardless of the operands' values. Replacing \ttt{ifle} (\emph{resp.} \ttt{if\_icmpne}) with \ttt{nop} (``0000 0000'', in binary) requires modifying 5 bits (\emph{resp.} 2 bits). 
Note that replacing \ttt{ifle} or \ttt{if\_icmpne} with \ttt{nop} only works with Java 6 or earlier\footnote
{
  Starting from Java 7, the typing system requires a stack map frame at the beginning of each basic block~\cite{JVM7}.
  Thus, a stack map frame is required by every branching instruction. 
  The stack map frame specifies the verification type of each operand stack entry and of each local variable. 
  Replacing \ttt{if\_icmpne} with \ttt{if\_icmpeq} does not result in a violation of the related stack map frame, however, replacing it with \ttt{nop} does. 
  Hence, in order to simulate the attack by replacing \ttt{if\_icmpne} with \ttt{nop}, the stack map frame also has to be modified. 
}. 

Our experiment showed that $\mtt{M_{TI}}$ can detect both attack  scenarios presented above.

%-----------------------
\subsection{Jump Attack} 
A jump attack can be simulated, in Java bytecode, by replacing an instruction with \ttt{goto L} for a certain line number \ttt{L}.
However, this results in an inconsistent stackmap frame for Java 7 and latest versions. 
Nevertheless, it is possible to simulate the jump attack presented in Example~\ref{ex:j-attack} at the source code level
\footnote
{
  Indeed, it is not possible to simulate this attack by modifying the return address of \ttt{byteArrayCompare}.
  However, one can simulate the effect of \ttt{goto} using \ttt{break} and \ttt{continue} statements. 
}.
We note here that the main purpose is not performing the attack, but to validate that $\mtt{M_{J}}$ can detect jump attacks.
The latter is confirmed by our experiment. 
The faulted execution presented in Example~\ref{ex:j-attack} contains 16 events: 6 events \ttt{begin}, 6 events \ttt{end} and 2 events \ttt{eT}.
However, $\mtt{M_{J}}$ reports an attack after processing the first occurrence of event $\mtt{end}$ corresponding to $\B_3$.  
That is, after receiving 9 events: 4 events \ttt{begin}, 3 events \ttt{end} and 2 events \ttt{eT}. 
Note that the execution of $\B_2$ has been interrupted before, but $\mtt{M_{J}}$ cannot report an attack in this case until the end of the execution as event $\mtt{end}$ may appear at any time in the future.

\section{Related Work}
\label{sec:related-work}
This work introduces formal runtime verification monitors to detect fault attacks. 
Runtime verification/monitoring was successfully applied to several domains, {\eg} for monitoring financial transactions~\cite{ColomboP12}, monitoring IT logs~\cite{BasinCEHKM14}, monitoring electronic exams~\cite{fmsd/KassemFL17}, monitoring smart homes~\cite{El-HokayemF18a}.

In the following, we compare our work to the research endeavors that propose software-based protections against attacks. 
We distinguish between algorithm-level and instruction-level approaches.
%
%-----------------------------
\subsection{Algorithm-level Approaches}
At the algorithm level, there are approaches that use basic temporal redundancy~\cite{Robshaw97,eurocrypt/BonehDL97,ches/AumullerBFHS02,pieee/Bar-ElCNTW06, Ciet_practicalfault} such as computing a cryptographic operation twice then comparing the outputs. 
There are also approaches that use parity codes~\cite{ches/KarriKG03} and  digest values~\cite{fdtc/GenelleGP09}. 

Some other works make use of signature mechanisms and techniques to monitor executions, such as state automata and watchdog processor, in order to detect errors or protect the executions control flow. 
For instance, \cite{watchdog-85-267,DBLP:conf/itc/MahmoodME85,DBLP:journals/tc/SaxenaM90} use watchdogs for error detection at runtime. 
The underlying principle is to provide a watchdog processor~\cite{Lu1980WatchdogPA} with some information about the process (or processor) to be verified. Then, the watchdog concurrently collects the relevant information at runtime. 
An error is reported when the comparison test between the collected and provided information fails. 
The watchdog processor can also be used to detect control flow errors, as illustrated in~\cite{DBLP:conf/itc/MahmoodME85}, by verifying that the inserted assertions are executed in the order specified by the CFG. 
Ersoz~\etal~\cite{watchdog-85-267} propose the watchdog task, a software abstraction of the watchdog processor, to detect execution errors using some assertions.   
Saxena~\etal~\cite{DBLP:journals/tc/SaxenaM90} propose a control-flow checking method using checksums and watchdog.   
To ensure the control flow of a program, a signature is derived from the instructions based on checksum. 
However, these approaches focus on protecting solely basic blocks and the control flow, and do not protect branch instructions.  
Whereas our monitors detect attacks on branch instructions. 
Moreover, using a watchdog processor for monitoring involves larger communication overhead than using runtime verification. 

Nicolescu~\etal~\cite{DBLP:conf/dft/NicolescuSV03} propose a technique (SIED) to detect 
transient errors such as bit-flip.
This technique has been designed to be combined with an instruction duplication approach. 
It performs comparison checks and uses signatures to protect the intra-block and inter-block control flow, respectively. 
Relaying on comparison checks make it subject to fault injections that skip the check itself~\cite{iacr/BattistelloG15,fdtc/SchmidtH08}. 
Note that experiments have shown that SIED cannot detect all bit-flip faults. 
Later in~\cite{Nicolescu04}, the authors propose another error-detection mechanism that provides full coverage against single bit-flip faults.  
These works only consider single bit-flip as a fault model, whereas our definition of the attacks is independent of the technique used to perform the attack. 

Sere~\etal~\cite{Sr2011EvaluationOC} propose a set of countermeasures based on basic block signatures and security checks. 
The framework allows developers to detect mutant applications given a fault model, and thus developing secure applications. 
The countermeasures can be activated by the developer using an annotation
mechanism.
This approach requires some modifications in the Java Virtual Machine in order to perform the security checks. 
%%%%%%%%
%
Bouffard~\etal~\cite{DBLP:conf/sscc/BouffardTL13} propose an automaton-based countermeasure against fault injection attacks. 
For a given program, every state of the corresponding automaton corresponds to a basic block of the CFG, and each transition corresponds to an edge, \ie allowed control flow. 
Thus the size of the resulting automaton is proportional to the size of the CFG of the program. Whereas, our monitors are lightweight and small in size. 
%%%

Fontaine~\etal~\cite{Fontaine05} propose a model to protect control flow integrity (CFI). 
The approach relies on instrumenting the LLVM IR code, and then using an external monitor (state automaton) which enforces CFI at runtime. 
Lalande~\etal~\cite{DBLP:conf/esorics/LalandeHB14} propose an approach to detect intra-procedural jump attacks at source code level, and to automatically inject some countermeasures. 
However, these approaches do not consider test inversion attacks.  

Algorithm-level approaches do not require changes to the instruction code.  
However, they are not effective against compile-time modifications.  
Moreover, it has been shown that they are not robust against multiple fault injections~\cite{ches/AumullerBFHS02,cosade/EndoHHTFA14} or skipping the critical parts of the code~\cite{iacr/BattistelloG15,fdtc/SchmidtH08}. 
Furthermore, most of the existing algorithm-level approaches are not generic.

Our approach is based on the formal model of QEAs, which is one of the most expressive and efficient form of runtime monitor~\cite{Bartocci2017}.
Our approach is also generic as it can be applied to any application and the monitors can be easily tweaked and used in combination with the monitors for the program requirements.
Moreover, monitors may run in a hardware-protected memory as they are lightweight and small in size. 
This provides more protection against synchronized multiple fault injections on both the monitored program and the monitor.   
Furthermore, runtime monitoring is modular and compatible with the existing approaches, for instance, monitor $\mtt{M_{TI}}$ can be used to detect fault attacks on the test that compares the outputs when an operation is computed twice.  

Note, to ensure the correct extraction of the necessary information (\ie events) from a running program, we use emission duplication.
This may require the duplication of every related instruction as duplication at the source code level may not be sufficient.
%
%----------------------------------------
\subsection{Instruction-level Approaches}
At instruction-level, there are approaches that aim at providing fault-tolerance. These include (i) the approaches that apply the duplication or triplication of instructions~\cite{cases/BarenghiBKPR10,hipeac/BarryCR16} in order to provide 100\% protection against skip fault injections, and (ii) the approaches that rely on replacing every instruction with a sequence of functionally equivalent instructions such that skipping any of these instructions does not affect the outcome~\cite{jce/MoroHER14,jhss/Patranabis0M17}.  
Such approaches provide more guarantees than algorithm-level approaches.
Indeed, it is believed that they are robust against multiple fault injections under the assumption that skipping two consecutive instructions is too expensive and requires high synchronization capabilities~\cite{cases/BarenghiBKPR10,jce/MoroHER14}.  
However, they require dedicated compilers for code generation.
Moreover, approaches that apply the duplication or replacement of instructions are processor dependent as some instructions have to be replaced by a sequence of idempotent or functionally equivalent instructions. 
Furthermore, they introduce a large overhead in performance and footprint.
For instance, instruction duplication increases the overhead at least twice. 
Nevertheless, the overhead can be decreased by protecting only the critical parts of the code.  
In comparison, our monitors are easy to implement and deploy, introduce smaller overhead, and are independent of the processor and the complier.   
Note that, runtime monitoring does not provide 100\% protection against all fault attacks. 
Our monitors can detect test inversion and jump attacks.   
Providing more guarantees requires more monitors. 

There are also approaches that aim at ensuring CFI. 
Most existing CFI approaches follow the seminal work by Abadi \etal~\cite{tissec/AbadiBEL09}, which makes use of a special set of instructions in order to check the source and destination addresses involved in indirect jumps and indirect function calls. 
CFI approaches do not aim to provide a 100\% fault coverage.
Instead they aim at providing protections against jump-oriented attacks~\cite{cgo/ArthurMDA15,dimva/PayerBG15}, and return-oriented attacks~\cite{ccs/Shacham07}. 
A related technique called control flow locking (CFL) has been introduced by Bletsch \etal~\cite{acsac/BletschJF11} in order to provide protection against code-reuse attacks.
Instead of inserting checks at control flow transfers, CFL locks the corresponding memory address before a control flow transfer, and then unlocks it after the transfer is executed. 

Similar to other instruction-level approaches, these CFI approaches requires change to the instruction code, and usually introduce large overhead. 
Note that detecting jump attacks by our monitors is some sort of reporting CFI violations. 
Note also that CFI does not deal with test inversion attacks as taking any of the branches after a conditional branch does not violate CFI.

\section{Conclusions and Perspectives}
\label{sec:conclusion}
We formally define test inversion and jump attacks.
Then, we propose monitors expressed as Quantified Event Automata in order to detect these attacks. 
Our monitors are lightweight and small in size, and they support the duplication of events emission which provides protection against event skip attacks.   
Finally, we demonstrate the validity of our monitors using attack examples on \ttt{verifyPIN}.  

In the future, we will define more monitors to detect additional attacks following the principles exposed in this paper. For example, a monitor that can detect attacks on function calls. 
We also plan (i) to verify applications larger than \ttt{verifyPIN}, combined with detailed feasibility and performance analysis, (ii) to use a Java bytecode editing tool, such as ASM~\cite{Kuleshov07usingthe} or JNIF~\cite{Mastrangelo2014}, to simulate faults, and (iii) to deploy the monitors on hardware architectures such as smart cards, Raspberry Pi, and microcontrollers based on Arm Cortex-M processor. 
Furthermore, we consider building a tool for automatic generation of monitors from QEAs, and 
developing a runtime enforcement~\cite{FalconeMFR11,Falcone10,FalconeMRS18} framework where some corrective actions and countermeasures are automatically executed and taken respecticely once an attack is detected. 

\bibliographystyle{unsrt}  
\bibliography{biblio}

\appendix
\section{Proofs}
\label{sec:proofs}

 \propositionTI*

  \begin{proof}
  Assume that $\mtt{M_{TI}}$ rejects $\mtt{P_{exec}}$. 
  We have to show that there is a test inversion attack on $\mtt{P_{exec}}$, 
  \ie $\mtt{P_{exec}}$ violates $\mtt{R_1}$ or $\mtt{R_2}$.   
  As $\mtt{M_{TI}}$ rejects $\mtt{P_{exec}}$ then there exists $\mtt{i}$ such that $\mtt{M_{TI}(i)}$ fails (\ie ends in a failure state). 
  Thus, $\mtt{M_{TI}(i)}$ fires a transition into an implicit failure state since $\mtt{M_{TI}(i)}$ has only one (explicit) state which is an accepting state. 
  This means that $\mtt{P_{exec}}$ contains, for some $\mtt{x}$ and $\mtt{y}$, event $\mtt{eT(i,x,y)}$ such that $\mtt{(x \opreli y) = False}$ which violates $\mtt{R_1}$, or event $\mtt{eF(i,x,y)}$ such that $\mtt{(x \opreli y) = True}$ which violates $\mtt{R_2}$. 
  So, $\mtt{P_{exec}}$ violates $\mtt{R_1}$ or $\mtt{R_2}$, and thus there is a test inversion attack on $\mtt{P_{exec}}$. 
  Hence, we can conclude for the first direction. 

  To prove the second direction, we assume that there is a test inversion attack on $\mtt{P_{exec}}$, and we show that $\mtt{M_{TI}}$ rejects $\mtt{P_{exec}}$. 
  If there is a test inversion attack on $\mtt{P_{exec}}$, then $\mtt{P_{exec}}$ violates $\mtt{R_1}$ or $\mtt{R_2}$. 
  If $\mtt{P_{exec}}$ violates $\mtt{R_1}$ then it contains an event $\mtt{eT(i,x,y)}$ such that $\mtt{(x \opreli y) = False}$, which fires a transition into an implicit failure state as the guard related to event $\mtt{eT(i,x,y)}$ is not satisfied. Thus,  $\mtt{M_{TI}}$ fails and rejects $\mtt{P_{exec}}$. 
  If $\mtt{P_{exec}}$ violates $\mtt{R_2}$ then it contains an event $\mtt{eF(i,x,y)}$ such that $\mtt{(x \opreli y) = True}$, which fires a transition into an implicit failure state as the guard related to event $\mtt{eF(i,x,y)}$ is not satisfied. Thus,  $\mtt{M_{TI}}$ fails and rejects $\mtt{P_{exec}}$. 
  Hence, we conclude for the second direction and the proof is done. 
  \end{proof} 

  %
  %------------------------------------

    \propositionJump*
    \begin{proof}
    Assume that $\mtt{M_{J}}$ rejects $\mtt{P_{exec}}$. 
    We have to show that there is a jump attack on $\mtt{P_{exec}}$, \ie there exists $\mtt{i}$ such that $\mtt{P^J_{exec}(i)}$ = $\mtt{e_1, \ldots, e_n}$ violates $\mtt{R_3}$, $\mtt{R_4}$ or $\mtt{R_5}$.  
    As $\mtt{M_{J}}$ rejects $\mtt{P_{exec}}$ then there exists $\mtt{i}$ such that EA $\mtt{M_{J}(i)}$ fails while consuming $\mtt{P^J_{exec}(i)}$. 
    We split according to the cases in which $\mtt{M_{J}(i)}$ fails.  
    \begin{itemize}
      \item 
      $\mtt{M_{J}(i)}$ has encountered event $\mtt{end(i)}$ in state (1). 
      This means that there exists $\mtt{j}$ such that $\mtt{e_j=end(i)}$, and that $\mtt{e_{j-1}}$ is either $\epsilon$ or $\mtt{reset(i)}$ since state (1) is the initial state and it can only be reached by $\mtt{reset(i)}$.  
      Thus, $\mtt{P^J_{exec}(i)}$ violates $\mtt{R_4}$.   
      \item 
      $\mtt{M_{J}(i)}$ ends in state (2). Then, in this case, there exists $\mtt{j}$ such that $\mtt{e_j=begin(i)}$ since state (2) can only be reached from state (1) through event $\mtt{begin(i)}$. 
      Moreover, we can deduce that $\mtt{e_{j+1}}$ is neither $\mtt{begin(i)}$ nor $\mtt{end(i)}$.   
      Thus, $\mtt{P^J_{exec}(i)}$ violates $\mtt{R_3}$. 
      \item 
      $\mtt{M_{J}(i)}$ has encountered event $\mtt{e_j=reset(i)}$ in state (2). State (2) can only be reached from state (1) through event $\mtt{begin(i)}$. 
      Then, $\mtt{e_{j-1}=begin(i)}$. 
      So, there exists $\mtt{e_{j-1}=begin(i)}$ such that $\mtt{e_j}$ is neither $\mtt{end(i)}$ nor $\mtt{begin(i)}$ which violates $\mtt{R_3}$.
      \item 
      $\mtt{M_{J}(i)}$ ends in state (3). Then there exists $\mtt{j}$ such that $\mtt{e_{j-1}=}$ $\mtt{e_j=begin(i)}$ because state (2) can only be reached from state (1) and state (3) can only be reached from state (2), both through event $\mtt{begin(i)}$. 
      Moreover, $\mtt{e_{j+1} \neq end(i)}$ as there is a transition from state (3) into state (4) labeled by $\mtt{end(i)}$, but $\mtt{M_{J}(i)}$ ends in state (3). 
      Thus, $\mtt{P^J_{exec}(i)}$ violates $\mtt{R_3}$. 
      \item 
      $\mtt{M_{J}(i)}$ has encountered event $\mtt{reset(i)}$ in state (3). This case is similar to the previous case and violates $\mtt{R_3}$. 
      \item 
      $\mtt{M_{J}(i)}$ has encountered  event $\mtt{begin(i)}$ in state (3). Then there exists $\mtt{j}$ such that $\mtt{e_{j-1}=e_j=e_{j+1}}$ = $\mtt{begin(i)}$, which violates $\mtt{R_3}$. 
      \item 
      $\mtt{M_{J}(i)}$ has encountered event $\mtt{begin(i)}$ in state (4).
      This means that there exists $\mtt{j}$ such that $\mtt{e_j=begin(i)}$, and $\mtt{e_{j-1}=end(i)}$ since state (4) can only be reached by $\mtt{end(i)}$ from state (2) or state (3), which can be reached only be reached by $\mtt{begin(i)}$.   
      Thus, $\mtt{P^J_{exec}(i)}$ violates $\mtt{R_5}$. 
      \item 
      $\mtt{M_{J}(i)}$ has encountered event $\mtt{end(i)}$ in state (5). 
      Then there exists $\mtt{j}$ such that $\mtt{e_{j-1}=e_j}$ = $\mtt{e_{j+1}=end(i)}$ because state (5) can only be reached from state (4) by $\mtt{end(i)}$, and state (4) can only be reached by $\mtt{end(i)}$. 
      Thus, we have that $\mtt{e_j=e_{j+1}=}$ $\mtt{end(i)}$ and $\mtt{e_{j-1} \neq begin(i)}$, which violates $\mtt{R_4}$.
      \item 
      $\mtt{M_{J}(i)}$ has encountered event $\mtt{begin(i)}$ in state (5). 
      This means that there exists $\mtt{j}$ such that $\mtt{e_{j}=begin(i)}$ and $\mtt{e_{j-1}=end(i)}$ since state (5) can only be reached by $\mtt{end(i)}$. Thus, $\mtt{P^J_{exec}(i)}$ violates $\mtt{R_5}$. 
    \end{itemize}
    Hence, there exists $\mtt{i}$ such that $\mtt{P^J_{exec}(i)}$ violates $\mtt{R_3}$, $\mtt{R_4}$ or $\mtt{R_5}$, and we can conclude about the first direction. 

    To prove the second direction, we assume that there exist $\mtt{i}$ such that $\mtt{P^J_{exec}(i)}$ violates $\mtt{R_3}$, $\mtt{R_4}$ or $\mtt{R_5}$ and we show that $\mtt{M_{J}}$ rejects $\mtt{P_{exec}}$. %, \ie EA $\mtt{M_{J}(i)}$ rejects $\mtt{P^J_{exec}(i)}$. 
    %
    %
    %%%%
    We split cases according which requirement is violated. 
    \begin{itemize}
      %------ R3
      \item If $\mtt{P^J_{exec}(i)}$ violates $\mtt{R_3}$, then it contains an event $\mtt{e_j=begin(i)}$ for some integer $\mtt{j}$ such that:  
      \begin{itemize}
        \item %--- R3 I
        $\mtt{e_{j-1}} = \mtt{begin(i)}$ and $\mtt{e_{j+1}} \neq \mtt{end(i)}$. 
          In this case if, before receiving $\mtt{e_{j-1}}$, $\mtt{M_{J}(i)}$ was:  
          \begin{itemize}
            %State (1)
            \item 
            in state (1). 
            Then $\mtt{e_{j-1}} = \mtt{begin(i)}$ leads into state (2) and  $\mtt{e_{j}} = \mtt{begin(i)}$ leads into state (3). 
            As $\mtt{e_{j+1}} \neq \mtt{end(i)}$ and from state (3) there is only one explicit transition labeled by $\mtt{end(i)}$, then $\mtt{M_{J}(i)}$ ends in state (3) (\ie $\mtt{e_{j+1}}=\epsilon$) or a transition into an implicit failure state is fired.  
            %State (2)
            \item 
            in state (2).  
            Then $\mtt{e_{j-1}} = \mtt{begin(i)}$ leads into state (3), and  $\mtt{e_{j}} = \mtt{begin(i)}$ leads into an implicit failure state. 
            %Others
            \item 
            in state (3), (4) or (5). 
            Then $\mtt{M_{J}(i)}$ fails since none of the states (3), (4) and (5) has an explicit outgoing transition labeled by $\mtt{begin(i)}$. 
          \end{itemize}  
        \item %--- R3 II
        or $\mtt{e_{j-1}} \neq \mtt{begin(i)}$ and 
        $\mtt{e_{j+1}} \neq \mtt{end(i)}$, and ``$\mtt{e_{j+1}}\neq \mtt{begin(i)}$ or $\mtt{e_{j+2}} \neq\mtt{end(i)}$''.  
        This is equivalent to $\mtt{e_{j-1}} \neq \mtt{begin(i)}$ and,  
        $\mtt{e_{j+1}} = \epsilon$ (\ie $\mtt{j=n}$) or $\mtt{e_{j+1}} = \mtt{reset(i)}$ or  ``$\mtt{e_{j+1}} = \mtt{begin(i)}$ and $\mtt{e_{j+2}} \neq\mtt{end(i)}$'' since $\Sigma_{\mtt{J}} =\{\mtt{begin(i)},\mtt{end(i)}, \mtt{reset(i)}\}$. 
        In this case if, before receiving $\mtt{e_{j-1}}$, $\mtt{M_{J}(i)}$ was: 
        \begin{itemize}
          \item 
          in state (1). If $\mtt{e_{j-1}}$ is $\mtt{end(i)}$, a transition into an implicit failure state is fired. 
          Otherwise, we have that $\mtt{e_{j-1}}$ is $\mtt{reset(i)}$ or $\epsilon$ (\ie $\mtt{j=1}$), and thus $\mtt{M_{J}(i)}$ stays in state (1).  
          Then, $\mtt{e_j=begin(i)}$ leads into state (2). % 
          Then in state (2), if $\mtt{e_{j+1}}$ is $\mtt{reset(i)}$ a transition into an implicit failure state is fired; 
          if $\mtt{e_{j+1}}$ is $\epsilon$ then $\mtt{M_{J}(i)}$ ends in failure state (2);  
          if $\mtt{e_{j+1}}$ is $\mtt{begin(i)}$ the transition from state (2) into state (3) is fired and, as $\mtt{e_{j+2}} \neq\mtt{end(i)}$ in this case, then a transition into an implicit failure state is fired or $\mtt{M_{J}(i)}$ ends in failure state (3).  
          %% State (2)
          \item 
          in state (2). If $\mtt{e_{j-1}}$ is $\mtt{reset(i)}$, a transition into an implicit failure state is fired. 
          Otherwise, we have that $\mtt{e_{j-1}} = \mtt{end(i)}$ which leads from state (2) into state (4). Then in state (4), as $\mtt{e_{j}} = \mtt{begin(i)}$, a transition into an implicit failure state is fired. 
          %
          %%%%
           \item 
          in state (3). Similar to the case of state (2).
          %
          % If $\mtt{e_{j-1}}$ is $\mtt{reset(i)}$, a transition into an implicit failure state is fired. 
          % %
          % Otherwise, we have that $\mtt{e_{j-1}}$ is $\mtt{end(i)}$ which leads into state (4). Then in state (4), $\mtt{e_{j}} = \mtt{begin(i)}$, a transition into an implicit failure state is fired. 
          % 
          \item 
          in state (4). If $\mtt{e_{j-1}}$ is $\mtt{end(i)}$, the transition into state (5) is fired. Then in state (5), as $\mtt{e_{j}} = \mtt{begin(i)}$, a transition into an implicit failure state is fired.
          %
          %Otherwise, if $\mtt{e_{j-1}}$ is $\epsilon$, then 
           %$\mtt{e_j= \mtt{begin(i)}}$ leads into an implicit failure.
          %
          Otherwise, we have that $\mtt{e_{j-1}} = \mtt{reset(i)}$, and thus the transition from state (4) into state (1) is fired. 
          Then in state (1), as $\mtt{e_{j}} = \mtt{begin(i)}$, the transition into state (2) is fired. 
          Then in state (2), if $\mtt{e_{j+1}}$ is $\mtt{reset(i)}$ a transition into an implicit failure state is fired; 
          if $\mtt{e_{j+1}}$ is $\epsilon$ then $\mtt{M_{J}(i)}$ ends in state (2) which is a failure state; 
          if $\mtt{e_{j+1}}$ is $\mtt{begin(i)}$ the transition from state (2) into state (3) is fired and, as $\mtt{e_{j+2}} \neq\mtt{end(i)}$ in this case, then a transition into an implicit failure state is fired or $\mtt{M_{J}(i)}$ ends in state (3), which is a failure state. 
          %%%%
          \item 
          in state (5). If $\mtt{e_{j-1}}$ is $\mtt{reset(i)}$ the reasoning is similar to the case of state (4) when $\mtt{e_{j-1}} = \mtt{reset(i)}$.  
          Otherwise, we have that $\mtt{e_{j-1}} = \mtt{end(i)}$ which leads into an implicit failure state.  
          %
          %Otherwise, we have that $\mtt{e_{j-1}= \epsilon}$ and thus $\mtt{e_j= \mtt{begin(i)}}$ leads from state (5) into into an implicit failure state. 
        \end{itemize}
      \end{itemize}
      Hence, If $\mtt{P^J_{exec}(i)}$ violates $\mtt{R_3}$ then $\mtt{M_{J}(i)}$ fails, and thus $\mtt{M_{J}}$ rejects $\mtt{P_{exec}}$. 
      %
      % ----- R4 
      \item If $\mtt{P^J_{exec}(i)}$ violates $\mtt{R_4}$, then it contains an event $\mtt{e_j=end(i)}$ for some integer $\mtt{j}$ such that 
      \begin{itemize}
        % R4 I
        \item 
        $\mtt{e_{j-1}} \neq \mtt{begin(i)}$ and $\mtt{e_{j+1}} = \mtt{end(i)}$. 
        In this case if, before receiving $\mtt{e_{j-1}}$, $\mtt{M_{J}(i)}$ was:  
        \begin{itemize}
          \item 
          in state (1). If $\mtt{e_{j-1}}$ is $\mtt{reset(i)}$ the self-loop transition over state (1) is fired. Then, as $\mtt{e_j=end(i)}$, a transition into an implicit failure state is fired.  
          Otherwise, if $\mtt{e_{j-1}}$ is $\mtt{end(i)}$, then a transition into an implicit failure state is fired. 
          Otherwise, we have that $\mtt{e_{j-1}} = \epsilon$ (\ie $\mtt{j=1}$), and thus $\mtt{e_j=end(i)}$ leads from state (1) into an implicit failure state.   
          \item 
          in state (2). If $\mtt{e_{j-1}}$ is $\mtt{reset(i)}$ a transition into an implicit failure state is fired.  
          Otherwise, we have that $\mtt{e_{j-1}}=\mtt{end(i)}$, and thus  the transition from state (2) into state (4) is fired. 
          Then in state (4), event $\mtt{e_{j}} = \mtt{end(i)}$ leads into state (5). 
          Then in state (5), event $\mtt{e_{j+1}} = \mtt{end(i)}$ leads into an implicit failure state. 
          \item 
          in state (3). Similar to the case of state (2).
          \item 
          in state (4). If $\mtt{e_{j-1}} = \mtt{reset(i)}$ the transition to state (1) is fired. Then in state (1), $\mtt{e_j=end(i)}$leads into an implicit failure state.  
          Otherwise, we have that $\mtt{e_{j-1}} = \mtt{end(i)}$ which fires  the transition from state (4) into state (5). 
          Then in state (5), $\mtt{e_{j}} = \mtt{end(i)}$ leads into an implicit failure state. 
          \item 
          in state (5). If $\mtt{e_{j-1}} = \mtt{reset(i)}$ the transition to  state (1) is fired. Then $\mtt{e_j=end(i)}$ leads into an implicit failure state. 
          Otherwise, we have $\mtt{e_{j-1}} = \mtt{end(i)}$ which fires a transition into an implicit failure state.  
        \end{itemize} 
        % R4 II 
        \item 
        or $\mtt{e_{j+1}} \neq \mtt{end(i)}$ and 
        $\mtt{e_{j-1}} \neq \mtt{begin(i)}$, and ``$\mtt{e_{j-1}}\neq \mtt{end(i)}$ or $\mtt{e_{j-2}} \neq\mtt{begin(i)}$''.  
        This is equivalent to $\mtt{e_{j+1}} \neq \mtt{end(i)}$, and 
        $\mtt{e_{j-1}} = \mtt{reset(i)}$ or  ``$\mtt{e_{j-1}} = \mtt{end(i)}$ and $\mtt{e_{j-2}} \neq\mtt{begin(i)}$'' since $\Sigma_{\mtt{J}} =\{\mtt{begin(i)},\mtt{end(i)}, \mtt{reset(i)}\}$. 
        In this case if, before receiving $\mtt{e_{j-1}}$, $\mtt{M_{J}(i)}$ was: 
        \begin{itemize}
          \item 
          in state (1). If $\mtt{e_{j-1}}$ is $\mtt{end(i)}$, a transition into an implicit failure state is fired.
          Otherwise, we have that $\mtt{e_{j-1}}$ is $\mtt{reset(i)}$ or 
          $\epsilon$ (\ie $\mtt{j=1}$), and thus $\mtt{M_{J}(i)}$ stays in state (1). Then, as $\mtt{e_j=end(i)}$, a transition into an implicit failure state is fired.  
          %State (2)
          \item 
          in state (2). In this case $\mtt{e_{j-1}}$ must be equal to $\mtt{reset(i)}$ since if $\mtt{e_{j-1}} = \mtt{end(i)}$, then 
          $\mtt{e_{j-2}} \neq \mtt{begin(i)}$ and thus state (2) cannot be reached. 
          Indeed, state (2) can only be reached from state (1) by $\mtt{begin(i)}$. 
          In state (2), $\mtt{e_{j-1}} = \mtt{reset(i)}$ leads into an implicit failure state. 
          \item 
          in state (3). Similar to the case of state (2).
          \item 
          in state (4). If $\mtt{e_{j-1}}$ is $\mtt{reset(i)}$ the transition into state (1) is fired. 
          Then in state (1), as $\mtt{e_{j}} = \mtt{end(i)}$, a transition into an implicit failure state is fired. 
          Otherwise, we have that $\mtt{e_{j-1}} = \mtt{end(i)}$ which leads into state (5).  
          Then in state (5), $\mtt{e_{j}} = \mtt{end(i)}$ leads into an implicit failure state. 
          \item 
          in state (5). If $\mtt{e_{j-1}} = \mtt{reset(i)}$ the transition into state (1) is fired. 
          Then, as $\mtt{e_{j}} = \mtt{end(i)}$, a transition into an implicit failure state is fired. 
          Otherwise, we have that $\mtt{e_{j-1}} = \mtt{end(i)}$ which leads into an implicit failure state.  
        \end{itemize}
      \end{itemize}
      Hence, If $\mtt{P^J_{exec}(i)}$ violates $\mtt{R_4}$ then $\mtt{M_{J}(i)}$ fails, and thus $\mtt{M_{J}}$ rejects $\mtt{P_{exec}}$. 
      %
      %%% R5
      \item If $\mtt{P^J_{exec}(i)}$ violates  $\mtt{R_5}$, then it contains an event $\mtt{e_j=end(i)}$ for some integer $\mtt{j}$ such that $\mtt{e_{j+1}=begin(i)}$. In this case if, before receiving $\mtt{e_j}$, $\mtt{M_{J}(i)}$ was: 
      \begin{itemize}
        \item 
        in state (1). Then event $\mtt{e_j=end(i)}$ fires a transition into an implicit failure state.  
        \item 
        in state (2). Then event $\mtt{e_j=end(i)}$ fires the transition into state (4).   
        In state (4), $\mtt{e_{j+1}} = \mtt{begin(i)}$ fires a transition into an implicit failure state.  
        \item 
        in state (3). Similar to the case of state (2). 
        \item 
        in state (4). Then $\mtt{e_j=end(i)}$ fires the transition into state (5).   
        In state (5), $\mtt{e_{j+1}} = \mtt{begin(i)}$ fires a transition into an implicit failure state. 
        \item 
        in state (5). Then $\mtt{e_j=end(i)}$ fires a transition into an implicit failure state.  
      \end{itemize}
      Hence, if $\mtt{P^J_{exec}(i)}$ violates $\mtt{R_5}$ then $\mtt{M_{J}(i)}$ fails, and thus $\mtt{M_{J}}$ rejects $\mtt{P_{exec}}$. 
    \end{itemize}
    Therefore, if there exists $\mtt{i}$ such that $\mtt{P^J_{exec}(i)}$ violates $\mtt{R_3}$, $\mtt{R_4}$ or $\mtt{R_5}$ then $\mtt{M_{J}}$ rejects $\mtt{P_{exec}}$. Thus, we can conclude for the second direction and the proof is done. 
  \end{proof} 

\end{document}